\documentclass[10pt,twocolumn]{article}
\usepackage[letterpaper,margin=0.75in,columnsep=0.33in]{geometry}
\usepackage[T1]{fontenc}
\usepackage{amsmath,mathtools,amsthm}
\usepackage[tt=false]{libertine}
\usepackage[libertine]{newtxmath}
\usepackage{graphicx}
\usepackage{booktabs}
\usepackage{algorithm}
\usepackage{algpseudocode}
\usepackage{caption}
\usepackage{xcolor}
\usepackage[numbers,sort&compress]{natbib}
\usepackage[hidelinks]{hyperref}

\theoremstyle{plain}
\newtheorem{proposition}{Proposition}[section]
\theoremstyle{definition}
\newtheorem{definition}[proposition]{Definition}

\newcommand{\sysname}{\textsc{Lychee}}
\newcommand{\R}{\mathbb{R}}
\newcommand{\vq}{\mathbf{q}}
\newcommand{\vx}{\mathbf{x}}
\newcommand{\vc}{\mathbf{c}}

\newcommand{\lift}{\ell}
\title{\sysname: Certified Screening of Inverted Lists\\ for High-Recall Vector Search}
\author{Dongfang Zhao\\
University of Washington, Tacoma School of Engineering and Technology\\
\texttt{dzhao@uw.edu}}
\date{}

\begin{document}
\twocolumn[
\begin{@twocolumnfalse}
\maketitle
\begin{abstract}
Nearest-neighbor search over embedding vectors now matches faces against watchlists, transactions against known fraud
patterns, and medical images against confirmed cases, where a missed neighbor is a missed threat, fraud, or
diagnosis. Because an approximate index cannot report which queries lost a true neighbor, these applications require recall
of 0.999 to 0.9999, which cuts the misses of recall 0.99 by 10 to 100 times, at a cost far below that of exact search
over billions of vectors. Conventionally, inverted-file indexes reach high recall
by probing more lists. However, they read and score every probed vector at full precision, so that the throughput of
a query falls in inverse proportion to the number of probed lists. 
On the other hand, graph indexes are fast at moderate recall but
saturate below the high-recall targets, whereas quantized lists must rerank their candidates with the full vectors. Our
insight is to decide for almost every probed vector that it cannot enter the top $k$ of the probed lists without
reading the vector, through the certified lower bound of exact search by certified exclusion. This paper turns the
above insight into \sysname, a GPU inverted-file search that screens the probed lists on tensor cores under a
threshold from anchors of the same lists, and that computes exact distances only for the few vectors that survive
the screening, in increasing order of their certified bounds. We prove that \sysname\ returns exactly the $k$
nearest vectors of the probed lists, which is the answer of an inverted-file search at full precision, while it reads
a few hundred full vectors per query regardless of the number of probed lists. We evaluate \sysname\ on eleven
datasets from one million to one billion vectors against ten GPU and CPU baselines on two GPU servers. At equal
recall with 200 probed lists, \sysname\ is 9.7 to 117 times faster than IVF-Flat of cuVS on the nine datasets of
up to 100 million vectors, and 65 times faster on Deep-100M at recall 0.9997 with 1,000 probed lists. \sysname\ also reaches recall levels
that CAGRA and HNSW do not reach, and it outruns exact search by certified exclusion by 5 to 18 times on the
datasets of more than ten million vectors. On four GPUs, \sysname\ searches one billion vectors at recall
0.9999 and is 44.5 times faster than IVF-Flat on BIGANN-1B.
\end{abstract}
\vspace{1.5em}
\end{@twocolumnfalse}
]
\section{Introduction}
\label{sec:intro}

Nearest-neighbor search over embeddings increasingly decides outcomes in which a missed neighbor has a direct impact
on real-world applications. In security, face recognition~\cite{schroff2015facenet,deng2019arcface} searches a watchlist for the
nearest face embedding, so that a missed neighbor allows a wanted person to pass the check. In health care, retrieval of
medical images~\cite{muller2004cbir} supports a diagnosis with the confirmed cases nearest to the image of a
patient, so that a missed case withholds evidence from the physician. In finance, fraud and anti-money-laundering
screening matches transactions against known fraud patterns and sanctioned parties, where a missed match is both a
loss and a compliance violation. Legal discovery~\cite{grossman2011tar}, the deduplication of training
data~\cite{lee2022dedup,abbas2023semdedup}, and retrieval-augmented generation~\cite{lewis2020rag} over dense
passage retrieval~\cite{karpukhin2020dpr,hofstatter2021tasb,khattab2020colbert} similarly need every relevant
document, near-duplicate, or passage.

These applications need recall far above the 0.9 to 0.95 at which approximate indexes are usually compared, because
the errors of an approximate index are silent. An index at recall 0.99 loses about one true neighbor in a hundred
without reporting which queries lost one, so that no operator can detect or repair the miss. A screening service
that answers one million identification queries per day at recall 0.99 therefore misses about ten thousand true
matches per day. Since the misses scale with one minus the recall, recall 0.9999 reduces them to about one hundred.
Ranking quality follows the same trend: on MS~MARCO~\cite{nguyen2016msmarco}, the mean reciprocal rank (MRR@10) of the retrieved passages rises from 0.3295 at
recall 0.957 to 0.3373 at recall 0.997, against 0.3378 for the exact neighbors (Section~\ref{sec:msmarco}). Exact
search with CertEx~\cite{zhao2026certex} removes the misses, but its cost grows in proportion to the dataset, whereas the
embedding datasets of these applications, such as DINOv2~\cite{oquab2024dinov2}, CLIP~\cite{radford2021clip}, and
DataComp~\cite{gadre2023datacomp}, hold millions to billions of vectors. These applications therefore need recall of
0.999 to 0.9999 at the cost of an approximate index, which lies at the expensive end of its recall and throughput
trade-off.

The inverted-file (IVF) index~\cite{sivic2003videogoogle,jegou2011pq},
as used in
Faiss~\cite{johnson2019faiss,douze2024faisslib} and cuVS~\cite{nvidia2025cuvs},
is the standard partition index for
this trade-off, which splits the dataset into lists by $k$-means. A query then scans the vectors of its $n_p$ nearest
lists, whose recall grows with $n_p$ because the true neighbors of a query near a list boundary lie in neighboring
lists.
However, IVF-Flat reads and scores every probed vector at full precision. On ImageNet DINOv2 embeddings, the
throughput of IVF-Flat in cuVS falls from 562,367 to 3,450 queries per second (QPS) as $n_p$ grows from 1 to 200 of
1,024 lists (Figure~\ref{fig:opener}a), while recall at 10 rises from 0.860 to 0.999. The other families of indexes
respond to this cost in two ways. First, compressed lists (IVF-PQ~\cite{jegou2011pq}, ScaNN~\cite{guo2020scann},
RaBitQ~\cite{gao2024rabitq}) shrink the scanned bytes but must rerank their candidates with the full vectors to
reach high recall. Second, graph indexes such as HNSW~\cite{malkov2020hnsw} and CAGRA~\cite{ootomo2024cagra} visit
few vectors per query but saturate below the targets: CAGRA reaches recall 0.986 at its largest search width on
DINOv2.

\begin{figure}[t]
  \centering
  \includegraphics[width=\columnwidth]{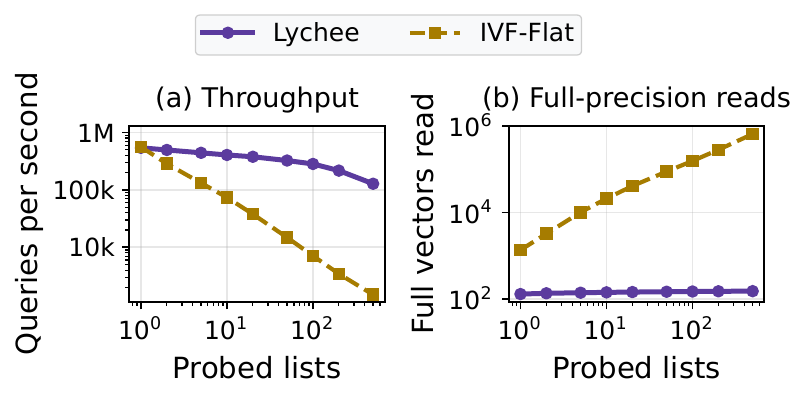}
  \caption{Recall and throughput on DINOv2 embeddings.}
  \label{fig:opener}
\end{figure}

One observation is that the probed lists of a high-recall query hold far more vectors than its answer needs. 
For instance, at $n_p = 200$, a DINOv2 query
probes about 280,000 vectors. Almost all of them lie far outside the radius of the 10th neighbor, so that the exact
top 10 among them can be certified from the full distances of fewer than 200 vectors (Figure~\ref{fig:opener}b). A
full-precision scan nevertheless spends its memory traffic and arithmetic on the far vectors. 
The cost of high recall
in an IVF index is therefore the cost of proving that far vectors are far.

Our insight is to certify the exclusion of far vectors inside the probed lists from low-dimensional projections,
rather than reading and scoring their full vectors. While exact search by certified exclusion
(CertEx)~\cite{zhao2026certex} proves that distant vectors cannot belong to the global nearest neighbors through
hardware-certified lower bounds on short projection rows, an IVF search requires certifying non-membership
within an irregular, query-dependent subset of lists. Proving exclusion inside the probed lists requires bounding
the $k$-th distance local to these lists from their own anchors, which enables tensor cores to screen candidate rows
at a fraction of full-precision memory traffic. Full-precision reads and exact distances are then needed only for the
few survivors whose bounds can still enter the probed top $k$, breaking the linear dependence of high-recall
IVF search on the number of probed vectors.

This paper turns the above insight into \sysname, a GPU IVF search with certified screening of the probed
lists. \sysname\ takes its threshold from exact distances to anchors of the probed lists, which bound the $k$-th
distance within the lists from above. A routed tensor-core pass screens all (query, list) pairs of a batch at once,
grouping the queries that probe a list so that every list streams through the tensor cores once per batch. The
survivors of all lists of a query are then verified together in increasing order of their certified bounds, until
no remaining survivor can enter the top $k$. We prove that this procedure returns exactly the $k$ nearest vectors of
the probed lists, namely the answer of IVF-Flat on the same lists.

\sysname\ changes the cost of high recall in an IVF index from the number of probed vectors to the number of
survivors of the certified screening. On DINOv2, \sysname\ serves 217,195 QPS at recall 0.999 ($n_p = 200$), 63
times the throughput of IVF-Flat and 53 times that of IVF-PQ with refinement at the same recall. The evaluation spans eleven datasets from one
million to one billion vectors, ten baselines that cover the inverted-file, graph, and exhaustive families on GPUs
and CPUs, and two GPU servers with five H100 GPUs in total. 

In summary, this paper makes the following contributions:
\begin{itemize}
  \item We propose a list-local certified threshold for inverted-file search, which bounds the $k$-th distance
        within the probed lists from the exact distances and thereby admits the certified
        screening of excluded vectors inside the lists.
  \item We prove that certified screening followed by tie-aware ordered verification returns exactly the $k$
        nearest vectors of the probed lists and derive a cost model in which the full-precision reads of a query
        follow its survivors rather than its probed vectors.
  \item We implement \sysname\ on GPUs with a routed tensor-core pass that screens all (query, list) pairs of a
        batch at once and verifies the survivors of all probed lists of a query together.
  \item We evaluate \sysname\ on eleven datasets from one million to one billion vectors against ten GPU and
        CPU baselines on two GPU servers and show that \sysname\ is 9.7 to 117 times faster than IVF-Flat at
        equal recall with 200 probed lists and up to 44.5 times faster on one billion vectors on four GPUs.
\end{itemize}

\section{Related Work}
\label{sec:related}

\paragraph{Partition and quantization indexes.} Inverted lists of a coarse
quantizer~\cite{sivic2003videogoogle,jegou2011pq} are the basis of the IVF indexes in
Faiss~\cite{johnson2019faiss,douze2024faisslib}, cuVS~\cite{nvidia2025cuvs}, and vector databases such as
Milvus~\cite{wang2021milvus}, Manu~\cite{guo2022manu}, AnalyticDB-V~\cite{wei2020analyticdbv}, and
VBase~\cite{zhang2023vbase}. The inverted multi-index~\cite{babenko2012imi}, revisited inverted
indices~\cite{baranchuk2018revisiting}, and SPANN~\cite{chen2021spann} improve the partition and the probe order,
while product quantization~\cite{jegou2011pq}, optimized product quantization~\cite{ge2014opq}, additive
quantization~\cite{babenko2014additivequant}, ScaNN~\cite{guo2020scann}, RaBitQ~\cite{gao2024rabitq}, its extended
variant~\cite{gao2025extrabitq}, and IVF-RaBitQ on GPUs~\cite{shi2026ivfrabitq} compress the scanned vectors and
rerank with full vectors~\cite{jegou2011bigann}. These methods change which vectors are probed or how many bytes an
approximate distance reads, whereas \sysname\ keeps the probed set of IVF-Flat and certifies the exclusion of probed
vectors, so that its answer equals the exact top $k$ of the probed lists.

\paragraph{Graph indexes.} Graph indexes such as NSG~\cite{fu2019nsg}, HNSW~\cite{malkov2020hnsw},
DiskANN~\cite{subramanya2019diskann}, Starling~\cite{wang2024starling}, and their filtered
variants~\cite{golub2024filtereddiskann,patel2024acorn} visit few vectors per query through greedy search on a
neighborhood graph built by methods such as NN-Descent~\cite{dong2011nndescent}. On GPUs,
SONG~\cite{zhao2020song}, GGNN~\cite{groh2023ggnn}, CAGRA~\cite{ootomo2024cagra}, BANG~\cite{venkatasubba2025bang},
and PilotANN~\cite{gui2026pilotann} parallelize the search. Surveys~\cite{wang2021annsurvey} and
benchmarks~\cite{aumuller2020annbench,simhadri2022bigann} show that graph searches reach high throughput at moderate
recall but saturate as the search width grows. \sysname\ instead reaches the recall of the inverted lists at every $n_p$, up to 0.9999 on DINOv2,
with a cost that grows with the survivors of the screening.

\paragraph{Exact search and lower bounds.} Lower-bounding filters for exact search date back to the GEMINI
framework~\cite{faloutsos1994gemini}, the VA-file~\cite{weber1998vafile}, metric trees such as
AESA~\cite{vidal1986aesa}, vp-trees~\cite{yianilos1993vptree}, M-trees~\cite{ciaccia1997mtree}, and cover
trees~\cite{beygelzimer2006cover}, and iDistance~\cite{jagadish2005idistance}, which exclude candidates whose lower
bound exceeds a current threshold. In high dimensions these bounds lose their
power~\cite{beyer1999meaningful,weber1998vafile}, so that GPU exact search falls back to brute
force~\cite{garcia2008gpuknn,johnson2019faiss}. CertEx~\cite{zhao2026certex}
restores the power of the lower bound in high dimensions through a principal-component lift with a rounding
certificate for tensor cores. \sysname\ applies this bound inside the probed lists of an IVF index under a
list-local threshold, which yields the answer of IVF-Flat on the same lists.

\begin{figure*}[t]
  \centering
  \includegraphics[width=\textwidth]{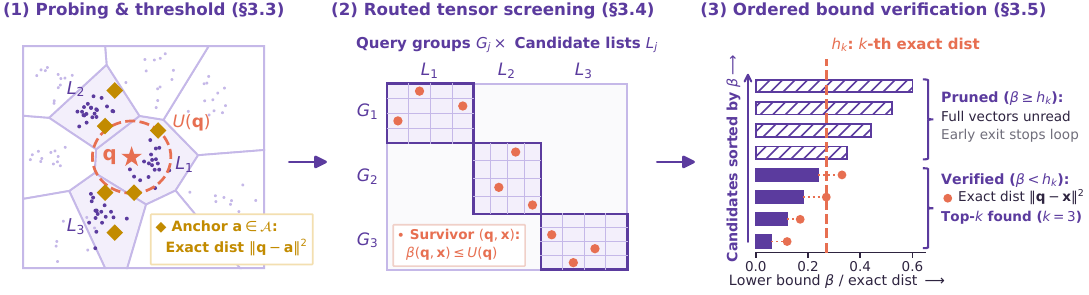}
  \caption{The search pipeline of \sysname\ for a batch of queries across its three execution stages.}
  \label{fig:pipeline}
\end{figure*}

\section{Methodology}
\label{sec:method}

\subsection{Preliminaries and Background}
\label{sec:background}

\paragraph{Inverted-file search.} An inverted-file index~\cite{sivic2003videogoogle,jegou2011pq} partitions a dataset $\mathcal{X} = \{\vx_1, \ldots, \vx_N\} \subset \R^D$ into $P$ lists
$L_1, \ldots, L_P$ around centers $\vc_1, \ldots, \vc_P$ from $k$-means, assigning every vector to its nearest
center. A query $\vq$ ranks the centers by distance and probes the $n_p$
nearest ones. The probed lists form the candidate set $S(\vq)$, from which IVF-Flat returns the $k$ vectors that
lie nearest to $\vq$ under the squared Euclidean distance. Recall therefore depends on the lists and on $n_p$ alone,
because the scan of $S(\vq)$ is exact up to floating-point rounding. Multi-index
structures~\cite{babenko2012imi}, learned routing~\cite{baranchuk2018revisiting}, and balanced partitions for
disk-resident data~\cite{chen2021spann} refine the choice of $S(\vq)$, although each of them still reads the vectors
of the chosen lists.
The scan of the probed lists dominates the cost of a high-recall query. A query reads $|S(\vq)| \approx n_p N / P$
vectors of $4D$ bytes each, so that its memory traffic grows linearly with $n_p$. GPU implementations such as
Faiss~\cite{johnson2019faiss} and cuVS~\cite{nvidia2025cuvs} therefore spend the time of a high-recall query on
this traffic. Compressed
lists reduce the bytes per vector through product quantization~\cite{jegou2011pq,ge2014opq}, anisotropic
quantization~\cite{guo2020scann}, or randomized binary codes~\cite{gao2024rabitq,shi2026ivfrabitq}. However, the
compressed distances reorder the candidates, so that high recall requires a refinement step~\cite{jegou2011bigann} that reads the full
vectors of a multiple of $k$ candidates per query. The cost of a high-recall IVF query is
therefore set by its full-precision reads, either of the whole probed set or of the refined candidates.

\paragraph{Certified exclusion.} Recently, certified exclusion (CertEx)~\cite{zhao2026certex} was proposed to replace most full-precision reads with a certified test. Consider an orthonormal matrix $V \in \R^{D \times m}$ of the $m$ principal directions
of the dataset and its mean $\boldsymbol{\mu}$. CertEx lifts every vector to
\begin{equation}
  \lift(\vx) = \bigl[\, V^\top (\vx - \boldsymbol{\mu});\ \|(I - V V^\top)(\vx - \boldsymbol{\mu})\| \,\bigr] \in \R^{m+1},
  \label{eq:lift}
\end{equation}
The Cauchy-Schwarz inequality on the residual components then gives $\|\lift(\vq) - \lift(\vx)\| \le \|\vq - \vx\|$
for every pair. The lifted rows are stored in half precision together with four certificate coordinates that bound
their rounding errors, so that one tensor-core product of a query row with a vector row yields a lower bound
$\beta(\vq, \vx) \le \|\vq - \vx\|^2$ that holds under every rounding of the product.
A vector $\vx$ survives the screening of a query only if $\beta(\vq, \vx) \le U(\vq)$, where $U(\vq)$ bounds the
squared distance of the $k$-th nearest neighbor from above. CertEx sets $U(\vq)$ from the exact distances of sampled
anchors, since any $k$ distinct vectors bound the $k$-th distance, and verifies the survivors in increasing order of
$\beta$ until the next bound exceeds the $k$-th exact distance found. Every vector that the screening excludes costs
$2(m + 5)$ bytes instead of $4D$, because only the survivors are read at full precision. CertEx applies this test to the
whole dataset, so that its exact search still screens all $N$ rows per query. In contrast, an inverted-file query
needs only the $k$ nearest vectors of its probed lists, which requires a threshold that bounds the $k$-th distance
within these lists.

\subsection{\sysname\ Overview}
\label{sec:problem}

\sysname\ performs approximate nearest-neighbor search by returning the exact top-$k$ candidates of the probed lists, matching the full-precision candidate ranking of IVF-Flat. Given a query $\vq \in \R^D$ and probed set $S(\vq) = L_{j_1} \cup \cdots \cup L_{j_{n_p}}$, the target answer $A_k(\vq)$ minimizes the sum of squared distances over any $k$-subset:
\begin{equation}
  A_k(\vq) = \operatorname*{arg\,min}_{T \subseteq S(\vq),\, |T| = k} \sum_{\vx \in T} \|\vq - \vx\|^2 ,
  \label{eq:target}
\end{equation}
with ties broken by index. Evaluating Eq.~\eqref{eq:target} exactly attains the recall of the inverted lists at every $n_p$, leaving query latency and throughput as the only quantities to improve.

The cost of Eq.~\eqref{eq:target} is dominated by full-precision reads. Define $r_k(\vq)$ as the $k$-th smallest
squared distance from $\vq$ to $S(\vq)$. Every vector of $S(\vq)$ farther than $r_k(\vq)$ is excluded from
$A_k(\vq)$, although a full-precision scan learns this fact only after reading the vector. \sysname\ instead proves
exclusion from the low-dimensional lifted row of the vector, under a threshold $U(\vq) \ge r_k(\vq)$ that the probed
lists themselves provide.

Figure~\ref{fig:pipeline} illustrates the resulting search pipeline of a query batch across its three algorithmic
stages. In Stage 1, the search identifies $n_p$ probed lists and gathers exact distances to their anchors (Definition~\ref{def:anchors}) to
establish a list-local threshold $U(\vq) \ge r_k(\vq)$ (Figure~\ref{fig:pipeline}, panel~1). In Stage 2, a routed
tensor-core pass groups queries probing the same candidate list $L_j$ into a query group $G_j$, screening their half-precision lifted rows on tensor cores and filtering out candidates whose certified lower bound $\beta(\vq, \vx)$ exceeds $U(\vq)$ (Figure~\ref{fig:pipeline}, panel~2). In
Stage 3, the survivors of all probed lists are aggregated per query and verified at full precision in increasing
order of their certified bounds, terminating verification as soon as the lower bound reaches the current $k$-th
exact distance $h_k$ (Figure~\ref{fig:pipeline}, panel~3).

\subsection{List-Local Certified Threshold}
\label{sec:threshold}

The screening threshold of \sysname\ is established from exact distances to a small set of representative vectors chosen from the probed lists, termed \emph{anchors} (Figure~\ref{fig:pipeline}, panel~1).

\begin{definition}[List Anchors and Probed Anchor Set]
\label{def:anchors}
Consider an inverted list $L_j = (\vx_{j, 1}, \ldots, \vx_{j, |L_j|}) \subset \R^D$. For a fixed stride $s \in \mathbb{N}^+$, the \emph{anchors} of $L_j$ are the designated subset of stored vectors at regular index intervals:
\begin{equation}
  \Omega(L_j) = \bigl\{\, \vx_{j, i \cdot s} \;\big|\; 1 \le i \le \lfloor |L_j| / s \rfloor \,\bigr\} \subset L_j,
\end{equation}
whose full-precision representations are preserved in device memory. For a query $\vq$ probing lists $S(\vq) = \bigcup_{j \in J(\vq)} L_j$, where $J(\vq) = (j_1, \ldots, j_{n_p})$ ranks cluster centers by distance $\|\vq - \vc_{j_1}\| \le \cdots \le \|\vq - \vc_{j_{n_p}}\|$, the \emph{probed anchor set} $\mathcal{A}(\vq) \subseteq S(\vq)$ is the sequence of the first $b$ anchors gathered from $\Omega(L_{j_1}), \Omega(L_{j_2}), \ldots$ in center order, where $b \ge k$ denotes the anchor budget.
\end{definition}

Given $\mathcal{A}(\vq)$, \sysname\ computes the exact squared distance $\|\vq - \vx\|^2$ for each anchor $\vx \in \mathcal{A}(\vq)$. The list-local certified threshold is then
\begin{equation}
  U(\vq) = (1 + \tau) \, d_{(k)}\bigl(\vq, \mathcal{A}(\vq)\bigr),
  \label{eq:threshold}
\end{equation}
where $d_{(k)}$ is the $k$-th smallest exact squared distance among the anchors $\mathcal{A}(\vq)$, and $\tau$ is a small relative slack that covers the rounding of the screening threshold. If the probed lists hold fewer than $k$ anchors, $U(\vq)$ falls back to a bound that every stored vector satisfies, computed from the norm of the query and the largest norm of the stored vectors. The query then scans its fewer than $(k + n_p)\, s$ probed vectors exactly.

\begin{proposition}
\label{prop:threshold}
Under Definition~\ref{def:anchors}, if~$|\mathcal{A}(\vq)| \ge k$, then $U(\vq) \ge r_k(\vq)$.
\end{proposition}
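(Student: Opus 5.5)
The plan is a direct order-statistics argument: a subset of $S(\vq)$ cannot have a smaller $k$-th distance than the whole of $S(\vq)$. I would first record that $\mathcal{A}(\vq) \subseteq S(\vq)$ as a set of distinct stored vectors. Each anchor is an element $\vx_{j,i\cdot s}$ of a probed list $L_j$ with $j \in J(\vq)$. The lists partition $\mathcal{X}$, because every vector is assigned to exactly one center, so distinct (list, index) pairs are distinct elements of $S(\vq)$. The anchors gathered in center order are therefore $|\mathcal{A}(\vq)| \ge k$ distinct members of $S(\vq)$. Distinctness should be made explicit, since it is what the counting below requires; repeated coordinate values among vectors are fine, because we count vectors, not values.

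The core step is then the following. Let $a_1, \ldots, a_k$ be anchors attaining the $k$ smallest exact squared distances in $\mathcal{A}(\vq)$, so that $\|\vq - a_i\|^2 \le d_{(k)}(\vq, \mathcal{A}(\vq))$ for each $i$. These are $k$ distinct vectors of $S(\vq)$ lying within squared distance $d_{(k)}$ of $\vq$. Hence at least $k$ elements of $S(\vq)$ have squared distance at most $d_{(k)}$, and by the definition of the $k$-th smallest value, $r_k(\vq) \le d_{(k)}(\vq, \mathcal{A}(\vq))$. I would phrase this as the monotonicity of order statistics under inclusion: for $B \subseteq C$ with $|B| \ge k$, the $k$-th smallest value over $C$ is at most the $k$-th smallest over $B$. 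Ties cause no trouble because the inequalities are non-strict.

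Finally, since $\tau \ge 0$ and $d_{(k)} \ge 0$ (squared distances are nonnegative), $U(\vq) = (1+\tau)\, d_{(k)} \ge d_{(k)} \ge r_k(\vq)$.

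I expect the only delicate point to be interpretive rather than mathematical. The claim concerns exact distances, with $\tau$ covering the rounding in computing $d_{(k)}$; I would state that the proposition is proved in exact arithmetic and that $\tau \ge 0$ is assumed. The fallback case $|\mathcal{A}(\vq)| < k$ is excluded by hypothesis.
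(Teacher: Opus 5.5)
Your proposal is correct and follows essentially the same argument as the paper: the $k$ nearest anchors are $k$ distinct vectors of $S(\vq)$ within squared distance $d_{(k)}$, so $r_k(\vq) \le d_{(k)} \le (1+\tau)\,d_{(k)} = U(\vq)$. Your explicit justification of distinctness (the lists partition $\mathcal{X}$) and of $\tau \ge 0$ only spells out points that the paper takes for granted.
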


\begin{proof}
By Definition~\ref{def:anchors}, the anchors are $|\mathcal{A}(\vq)| \ge k$ distinct vectors of $S(\vq)$. The $k$ nearest of them form a subset of
$S(\vq)$ of size $k$ whose largest squared distance is $d_{(k)}$. The $k$-th smallest squared distance over the
superset $S(\vq)$ cannot exceed the largest distance of any of its subsets of size $k$, so that $r_k(\vq) \le d_{(k)}
\le U(\vq)$.
\end{proof}

The threshold of Proposition~\ref{prop:threshold} is local to the probed lists. A threshold from anchors of the
whole dataset bounds the $k$-th distance in the dataset, which is smaller than $r_k(\vq)$ when the probed lists
miss some true neighbors. A screening under such a threshold could therefore return fewer than $k$ vectors of
$S(\vq)$. The list-local threshold instead keeps every member of $A_k(\vq)$ under the screening at every $n_p$.
Moreover, the anchors of the nearest lists come first, so that the budget $b$ spends exact distances on the anchors
that are most likely to lie near $\vq$ and keeps $U(\vq)$ close to $r_k(\vq)$.

\subsection{Routed Tensor Screening}
\label{sec:routed}

Given the list-local threshold $U(\vq)$ of each query from Stage 1, Stage 2 screens candidate vectors within probed lists on GPU tensor cores. A naive per-query scan cannot map onto tensor cores efficiently: each query probes a different subset of $n_p$ lists, while an individual inverted list of a few thousand vectors offers too little work for a single query to fill a tensor-core tile. \sysname\ therefore batches $B$ queries and groups the queries that probe list $L_j$ into a query group $G_j$. The queries in $G_j$ and the candidate vectors in $L_j$ then form dense matrix tiles of $128 \times 128$ rows on tensor cores (Figure~\ref{fig:pipeline}, panel~2). Every list therefore streams through cache and tensor cores once per batch, however many queries probe it. All lists share the global basis $V$ of CertEx, so that one lifted row of a query screens all its probed lists; Section~\ref{sec:work} shows that this basis excludes all but a few hundred probed vectors.

For each query row and candidate row in a tile, tensor cores compute their half-precision inner product, incorporating the certificate coordinates of Eq.~\eqref{eq:lift} to establish the certified lower bound $\beta(\vq, \vx) \le \|\vq - \vx\|^2$ under finite-precision rounding~\cite{zhao2026certex}. An in-register predicate compares $\beta(\vq, \vx)$ against the threshold $U(\vq)$ of that query, retaining vector $\vx$ only if $\beta(\vq, \vx) \le U(\vq)$. The vectors of list $L_j$ satisfying this predicate form the list survivors, depicted as coral dots in Figure~\ref{fig:pipeline} (panel~2). Aggregating these survivors over all $n_p$ probed lists of query $\vq$ produces the complete candidate survivor set:
\begin{equation}
  C \gets \bigl\{\vx \in S(\vq) \;\big|\; \beta(\vq, \vx) \le U(\vq)\bigr\},
  \label{eq:survivors}
\end{equation}
which is forwarded to Stage 3.

The following proposition establishes that certified screening prunes unpromising vectors with zero false dismissals.

\begin{proposition}[Screening Soundness]
\label{prop:screening}
Consider the exact top-$k$ nearest neighbors $A_k(\vq) \subseteq S(\vq)$ of query $\vq$ in its probed lists (Eq.~\eqref{eq:target}). If the list-local threshold satisfies $U(\vq) \ge r_k(\vq)$, then every true neighbor survives the certified screening: $A_k(\vq) \subseteq C$.
\end{proposition}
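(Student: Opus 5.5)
The argument is a short chain of inequalities applied to an arbitrary member of $A_k(\vq)$. Fix any $\vx \in A_k(\vq)$. By the definition of $r_k(\vq)$ as the $k$-th smallest squared distance from $\vq$ to $S(\vq)$, and since $A_k(\vq)$ consists of $k$ vectors of $S(\vq)$ minimizing the sum in Eq.~\eqref{eq:target}, I first claim that $\|\vq - \vx\|^2 \le r_k(\vq)$. I will then combine this with the certified lower bound of CertEx and the hypothesis on the threshold:
\begin{equation*}
  \beta(\vq, \vx) \le \|\vq - \vx\|^2 \le r_k(\vq) \le U(\vq).
\end{equation*}
This gives $\beta(\vq, \vx) \le U(\vq)$, so $\vx$ satisfies the predicate of Eq.~\eqref{eq:survivors} and hence $\vx \in C$, because $\vx \in S(\vq)$.

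The only step that needs care is the first inequality. Because the sum-minimizing $k$-subset has exactly the $k$ smallest distances, every member of it has distance at most $r_k(\vq)$. I will argue this by exchange. Suppose some $\vx \in A_k(\vq)$ had $\|\vq-\vx\|^2 > r_k(\vq)$. At least $k$ vectors of $S(\vq)$ have distance at most $r_k(\vq)$, and at most $k-1$ of them lie in $A_k(\vq)$ besides $\vx$. So some such $\vy \notin A_k(\vq)$ exists, and swapping $\vx$ for $\vy$ strictly lowers the sum, a contradiction. Ties at $r_k$ do not matter, since the inequality $\le$ permits them, and tie-breaking by index only selects among sets that already satisfy it.

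The second ingredient is that $\beta(\vq,\vx) \le \|\vq-\vx\|^2$ holds under every rounding of the half-precision tensor-core product. I take this as given from CertEx. It combines the contraction $\|\lift(\vq)-\lift(\vx)\| \le \|\vq-\vx\|$ of the lift in Eq.~\eqref{eq:lift} with the rounding certificate coordinates. The remaining point to check is that the comparison against $U(\vq)$ inside the predicate is itself evaluated soundly. I will note that the slack $\tau$ in Eq.~\eqref{eq:threshold} is included to absorb the rounding of the threshold, so the stored threshold is still at least $r_k(\vq)$ and the comparison preserves the exact inequality.
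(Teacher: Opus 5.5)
Your proof is correct and follows the paper's argument: the same chain $\beta(\vq,\vx) \le \|\vq-\vx\|^2 \le r_k(\vq) \le U(\vq)$, applied to an arbitrary member of $A_k(\vq)$. Your exchange argument spells out the step $\|\vq-\vx\|^2 \le r_k(\vq)$, which the paper takes directly from the definition of the $k$-th distance. Your remark on $\tau$ is not needed, because $U(\vq) \ge r_k(\vq)$ is already a hypothesis of the statement.
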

\begin{proof}
Consider any vector $\vx^* \in A_k(\vq)$. By definition of the $k$-th distance, $\|\vq - \vx^*\|^2 \le r_k(\vq)$. Because the certified lower bound is sound under finite-precision arithmetic, $\beta(\vq, \vx^*) \le \|\vq - \vx^*\|^2$. By transitivity and the validity of the threshold, $\beta(\vq, \vx^*) \le r_k(\vq) \le U(\vq)$. Vector $\vx^*$ therefore satisfies the screening predicate, so $\vx^* \in C$.
\end{proof}

\subsection{Ordered Bound Verification}
\label{sec:verification}

In Stage 3, \sysname\ verifies the survivors of the screening in increasing order of their certified lower bounds
$\beta(\vq, \vx)$, computing exact squared distances $\|\vq - \vx\|^2$ at full precision. Algorithm~\ref{alg:search}
summarizes the search of a query across the three stages.

Lines 5 to 10 execute the ordered verification over the survivor set $C$. Verification stops as soon as the next lower
bound, raised to 0 since a squared distance is never negative, reaches the current $k$-th exact distance $h_k$
(Figure~\ref{fig:pipeline}, panel~3). Every remaining survivor in $C$ then has an exact distance at least as large, so
that it can at best tie with the $k$-th neighbor already found. Stopping verification at equality is critical for
datasets with duplicate vectors (such as DataComp), in which thousands of candidates share the $k$-th distance of a
query and would otherwise all be verified. On the GPU, the verification runs with one warp per query computing exact
distances in float64.

\begin{algorithm}[t]
\small
\caption{Certified screening of the probed lists of a query.}
\label{alg:search}
\begin{algorithmic}[1]
\Require query $\vq$, number of probed lists $n_p$, anchor budget $b$
\Ensure $A_k(\vq)$, the $k$ nearest vectors of the probed lists
\State $J \gets$ the $n_p$ centers nearest to $\vq$; \ $S \gets \bigcup_{j \in J} L_j$
\State $\mathcal{A} \gets$ the first $b$ anchors of $S$ in the order of $J$ \Comment{Def.~\ref{def:anchors}}
\State $U \gets (1 + \tau)\, d_{(k)}(\vq, \mathcal{A})$ \Comment{Eq.~\eqref{eq:threshold}}
\State $C \gets \{\vx \in S : \beta(\vq, \vx) \le U\}$ \Comment{certified screening}
\State $H \gets \emptyset$ \Comment{top $k$ so far; $h_k$: its $k$-th distance}
\For{$\vx \in C$ in increasing order of $\beta(\vq, \vx)$}
  \If{$|H| = k$ \textbf{and} $\max(\beta(\vq, \vx), 0) \ge h_k$} \textbf{break}
  \EndIf
  \State insert $\vx$ into $H$ with distance $\|\vq - \vx\|^2$, keeping the $k$ nearest
\EndFor
\State \Return $H$
\end{algorithmic}
\end{algorithm}

\begin{proposition}
\label{prop:exact}
Algorithm~\ref{alg:search} returns $\min(k, |S(\vq)|)$ vectors of $S(\vq)$ whose squared distances to $\vq$ equal
those of $A_k(\vq)$.
\end{proposition}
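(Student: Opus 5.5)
The plan is to prove the claim as a loop invariant of Algorithm~\ref{alg:search}, split by whether the probed lists hold at least $k$ anchors. Let $d(\vx) = \|\vq - \vx\|^2$. Let $r_1 \le \cdots \le r_k$ be the sorted distances of $A_k(\vq)$; when $|S(\vq)| < k$, this multiset is all distances of $S(\vq)$.

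\textbf{Case $|S(\vq)| \ge k$ with $|\mathcal{A}(\vq)| \ge k$.} Proposition~\ref{prop:threshold} gives $U(\vq) \ge r_k(\vq)$. Proposition~\ref{prop:screening} then gives $A_k(\vq) \subseteq C$. In the fallback case, $U$ bounds every stored vector, so $C = S(\vq)$ and the same inclusion holds. Either way we need only $A_k(\vq) \subseteq C \subseteq S(\vq)$. The multiset of the $k$ smallest distances in $C$ therefore equals $\{r_1,\ldots,r_k\}$, since $C$ lies in $S(\vq)$ and contains a $k$-subset realizing them. The task reduces to showing that the ordered loop over $C$ returns $k$ vectors whose distances form the $k$ smallest distances in $C$.

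\textbf{The loop.} Let $\vx_1, \vx_2, \ldots$ be the survivors of $C$ in increasing order of $\beta$. The invariant is that after processing $\vx_1,\ldots,\vx_t$, the set $H$ holds $\min(k,t)$ vectors with the smallest distances among $\vx_1,\ldots,\vx_t$. This is routine because insertion keeps the $k$ nearest. Two termination cases follow.

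\begin{itemize}
\item If the loop exhausts $C$, then $H$ holds the $\min(k,|C|)$ smallest distances of $C$, and we are done. Here $|C| \ge k$ when $|S(\vq)| \ge k$, and $C = S(\vq)$ when $|S(\vq)| < k$.
\item If the loop breaks at $\vx_{t+1}$, then $|H| = k$ and $\max(\beta(\vq,\vx_{t+1}),0) \ge h_k$. Every later survivor $\vx_u$ with $u > t$ satisfies $d(\vx_u) \ge \beta(\vq,\vx_u) \ge \beta(\vq,\vx_{t+1})$, and also $d(\vx_u) \ge 0$. Hence $d(\vx_u) \ge h_k$.
\end{itemize}

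\textbf{Exchange step after a break.} We must show that the $k$ smallest distances of $C$ equal those of $H$ as a multiset. Suppose the $j$-th smallest distance of $C$ were strictly less than the $j$-th smallest in $H$, for some $j \le k$. Then at least $j$ elements of $C$ have distance below the $j$-th value of $H$, which is at most $h_k$. Every unvisited survivor has distance at least $h_k$, so all these elements are visited. But the visited elements' $j$ smallest distances are exactly the first $j$ of $H$ by the invariant, a contradiction. The reverse inequality is immediate, since $H \subseteq C$. Ties at $h_k$ are harmless because the claim compares distances, not identities.

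\textbf{Expected obstacles.}
\begin{itemize}
\item \emph{Case $|S(\vq)| < k$.} The fallback must make $C = S(\vq)$. Otherwise the loop never breaks, since $|H| < k$, and it returns all of $C$. The fallback bound must dominate every $\beta$, which I will argue from $\beta \le d \le (\|\vq\|+\max\|\vx\|)^2 \le U$.
\item \emph{Rounding.} I will assume, as in Proposition~\ref{prop:screening}, that $\beta$ is a true lower bound on $d$ and that the float64 distances are the reference distances.
\end{itemize}
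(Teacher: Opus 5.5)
Your proof is correct and follows the paper's own argument. Screening soundness gives $A_k(\vq) \subseteq C \subseteq S(\vq)$, the break condition forces every unvisited survivor to have distance at least $h_k$, so $H$ carries the $k$ smallest distances of $C$ and hence of $S(\vq)$. Your treatment of the fallback case $|\mathcal{A}(\vq)| < k$ (where $C = S(\vq)$, which also covers $|S(\vq)| < k$), the exhaustion case, and the explicit exchange step are added rigor that the paper's proof leaves implicit, not a different route.
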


\begin{proof}
Consider any $\vx \in A_k(\vq)$. Its squared distance is at most $r_k(\vq)$, which is at most $U$ by
Proposition~\ref{prop:threshold}. The certified bound satisfies $\beta(\vq, \vx) \le \|\vq - \vx\|^2 \le U$, so
that $\vx$ belongs to $C$. The loop over $C$ visits the survivors in increasing order of $\beta$ and stops at a
survivor $\vx'$ only if $|H| = k$ and $\max(\beta(\vq, \vx'), 0)$ reaches the $k$-th distance in $H$. Every
survivor that follows $\vx'$ in the order has a bound at least $\beta(\vq, \vx')$ and a nonnegative distance, so
that its distance is at least the $k$-th distance in $H$. The distances in $H$ are therefore the $k$ smallest
distances over $C$. Since $A_k(\vq) \subseteq C \subseteq S(\vq)$, the $k$ smallest distances over $C$ are those
over $S(\vq)$, namely the distances of $A_k(\vq)$.
\end{proof}

\subsection{Cost Analysis}
\label{sec:cost}

Algorithm~\ref{alg:search} runs in time $O(|S(\vq)|(m + 5) + b D + |C| \log |C| + V(\vq) D)$ per query, where $V(\vq)$
counts the survivors verified in Lines 6 to 10. The four terms account for the certified screening on tensor cores,
the exact distances of the anchors, the sort of the survivors, and the verification at full precision, compared to
$O(|S(\vq)| D)$ for IVF-Flat. The extra space per query is $O(b + |C|)$.

The memory traffic of \sysname\ depends on the probed rows primarily through the short lifted rows ($2(m + 5)$ bytes
each). Full-precision $4D$-byte reads are restricted to the $b$ anchors and $V(\vq)$ verified survivors. Against the
$4D\,|S(\vq)|$ bytes read by IVF-Flat, the per-query traffic ratio approaches $2D / (m + 5)$ as $n_p$ grows (e.g.,
$3\times$ for DINOv2 at $m = 507$). The routed pass of Section~\ref{sec:routed} divides this traffic further, because
a probed list streams through the tensor cores once for its whole query group $G_j$, whereas IVF-Flat reads the list
once for every query that probes it. The reuse across the batch therefore accounts for the speedup on 8-bit descriptors,
whose lifted rows are longer than the vectors (256 against 128 bytes for SIFT at $m = 123$). Device memory per vector is $2(m + 5) + 4D$ bytes when full vectors reside on the GPU and $2(m + 5) +
4D/s$ bytes when they reside in pinned host memory, since only the anchors stay on the GPU at full precision. Finally,
while CertEx~\cite{zhao2026certex} screens all $N$ lifted rows of the dataset, \sysname\ screens only the probed set
$|S(\vq)| \approx n_p N / P$, scaling down the screening work by $n_p / P$ (e.g., 9.8\% of the lists on SIFT1M and 3.1\%
on BIGANN-100M at recall 0.999).

\section{Evaluation}
\label{sec:eval}

\subsection{Experimental Setup}
\label{sec:setup}

\paragraph{Datasets.} The evaluation covers eleven datasets of text and image embeddings and local descriptors, from one
million to one billion vectors and from 96 to 960 dimensions, which test the certified screening on wide and narrow
vectors at every scale (Table~\ref{tab:datasets}). ImageNet DINOv2 holds the DINOv2 ViT-B/14
embeddings~\cite{oquab2024dinov2} of the training images of ImageNet~\cite{deng2009imagenet,russakovsky2015ilsvrc},
queried by its validation images. GIST1M holds the 960-dimensional GIST descriptors~\cite{oliva2001gist} of the
ANN-Benchmarks suite~\cite{aumuller2020annbench}. DataComp holds the CLIP ViT-L/14 embeddings~\cite{radford2021clip}
of 12.79 million web images of DataComp~\cite{gadre2023datacomp}, which contain large groups of duplicate images.
SIFT1M and the BIGANN prefixes hold the SIFT descriptors~\cite{lowe2004sift} of the BIGANN
dataset~\cite{jegou2011bigann} as 8-bit integers, the Deep1B prefixes hold the 96-dimensional image descriptors
of Deep1B~\cite{babenko2016deep}, and MS MARCO holds the 768-dimensional embeddings of its 8.8 million
passages~\cite{nguyen2016msmarco} by msmarco-distilbert-cos-v5~\cite{reimers2019sbert}, normalized so that Euclidean
distance ranks as cosine similarity, with the 6,980 labeled development queries. Ground truth is the exact top 10, which an exhaustive float64 search provides for
the Deep1B prefixes and corrects for three queries of the official Deep1B list.

\begin{table}[t]
  \centering
  \caption{The eleven datasets of the evaluation.}
  \label{tab:datasets}
  \footnotesize
  \begin{tabular*}{\columnwidth}{@{\extracolsep{\fill}}lrrlrr@{}}
    \toprule
    Dataset & Vectors & Dim. & Type & Queries & Lists \\
    \midrule
    ImageNet DINOv2 & 1.28M & 768 & float32 & 50,000 & 1,024 \\
    GIST1M & 1M & 960 & float32 & 1,000 & 1,024 \\
    SIFT1M & 1M & 128 & float32 & 10,000 & 1,024 \\
    Deep-10M & 10M & 96 & float32 & 10,000 & 4,096 \\
    BIGANN-10M & 10M & 128 & uint8 & 10,000 & 4,096 \\
    MS MARCO & 8.8M & 768 & float32 & 6,980 & 4,096 \\
    DataComp & 12.8M & 768 & float32 & 10,000 & 4,096 \\
    Deep-100M & 100M & 96 & float32 & 10,000 & 16,384 \\
    BIGANN-100M & 100M & 128 & uint8 & 10,000 & 16,384 \\
    BIGANN-1B & 1B & 128 & uint8 & 10,000 & $4 \times 16{,}384$ \\
    Deep1B & 1B & 96 & float32 & 10,000 & $4 \times 16{,}384$ \\
    \bottomrule
  \end{tabular*}
\end{table}

\paragraph{Platforms.} The experiments run on two GPU servers with the same GPU model and software stack, namely CUDA
12.6 with driver 560.35, PyTorch 2.14, and cuVS 26.10. The first server has one NVIDIA H100 GPU~\cite{nvidia2022h100} with 94~GB of
memory, 48 cores of an Intel Xeon Platinum 8468 processor, and 240~GB of host memory. It runs
\sysname, IVF-Flat, IVF-PQ, and CAGRA on ImageNet DINOv2, SIFT1M, BIGANN-10M, BIGANN-100M, and DataComp, together
with the width and list studies on ImageNet DINOv2 and SIFT1M. The second server has four H100 GPUs of the same
model, which NVLink connects pairwise, 192 cores of the same processor, and 960~GB of host memory. It runs every
method on GIST1M, Deep-10M, and Deep-100M, the remaining baselines on the other datasets, and the billion-scale
experiments on all four GPUs; every single-GPU experiment uses one of its GPUs. Every search processes all queries of a
dataset at once.

\paragraph{Metrics.} Recall at 10 is the fraction of the true 10 nearest neighbors in the returned list, averaged over
the queries. On DataComp, whose duplicate images make the ids of tied neighbors interchangeable, a returned vector
counts as a true neighbor if its distance is at most the 10th true distance, as in standard benchmarks with
ties~\cite{simhadri2022bigann}. Throughput is the number of queries divided by the wall-clock time of the search,
including the probe, the screening, and the verification, as the best of two or three runs after one warm-up run. The throughput of one configuration differs by up to 4\% between the separately timed runs of different sections.
Exactness is checked separately: on every dataset, at 1 to 128 probed lists, every answer of \sysname\ equals the
float64 top 10 of its probed lists, with the same ids or the same distances under ties, for 1,000 queries per
dataset of up to ten million vectors and 500 per larger dataset.

\paragraph{Baselines.} The evaluation compares \sysname\ against ten baselines that cover the inverted-file, graph,
and exhaustive families on GPUs and CPUs (Table~\ref{tab:baselines}). IVF-Flat of cuVS~\cite{nvidia2025cuvs} scans the probed lists at full precision and computes the answer of
Eq.~\eqref{eq:target} in float32 on its own $k$-means lists. IVF-PQ compresses the vectors to $D/2$ bytes of 8-bit
codes, and IVF-SQ quantizes every coordinate to 8 bits; both refine their top $4k$ candidates with the full vectors.
IVF-RaBitQ~\cite{gao2024rabitq,gao2025extrabitq,shi2026ivfrabitq} stores 7-bit extended RaBitQ codes and runs with
and without the same refinement; it failed to build on DataComp. CAGRA~\cite{ootomo2024cagra} builds a graph of
degree 64 and sweeps its search width from 16 to 512, on every dataset except DataComp, whose duplicate groups made
the CAGRA build fail. The exhaustive search of cuVS computes all float32 distances. On the CPU, the IVF-Flat index of
Faiss~\cite{douze2024faisslib}, the HNSW index of hnswlib~\cite{malkov2020hnsw} with 32 links per node and a
construction width of 200, and ScaNN~\cite{guo2020scann} with its anisotropic 4-bit quantization and exact
reordering run on all 192 cores of the second server. Finally, CertEx~\cite{zhao2026certex} returns
the exact top 10 over the whole dataset by certified exclusion on one GPU. The IVF indexes use the lists of
Table~\ref{tab:datasets} and sweep the number of probed lists from 1 to 500, or to 1,000 on the datasets of 100
million vectors, whereas HNSW sweeps its search width from 16 to 1,024.

\begin{table}[t]
  \centering
  \caption{The baselines and their answers.}
  \label{tab:baselines}
  \footnotesize
  \begin{tabular*}{\columnwidth}{@{\extracolsep{\fill}}lll@{}}
    \toprule
    Baseline & Index and hardware & Answer \\
    \midrule
    IVF-Flat (cuVS) & inverted lists, GPU & probed top 10 \\
    IVF-PQ + refine (cuVS) & quantized lists, GPU & approximate \\
    IVF-SQ + refine (cuVS) & quantized lists, GPU & approximate \\
    IVF-RaBitQ (cuVS) & RaBitQ lists, GPU & approximate \\
    CAGRA (cuVS) & proximity graph, GPU & approximate \\
    Brute force (cuVS) & exhaustive float32, GPU & exact (float32) \\
    IVF-Flat (Faiss) & inverted lists, CPU & probed top 10 \\
    HNSW (hnswlib) & proximity graph, CPU & approximate \\
    ScaNN & quantized partitions, CPU & approximate \\
    CertEx & certified exclusion, GPU & exact \\
    \bottomrule
  \end{tabular*}
\end{table}

\paragraph{Implementation.} \sysname\ is a CUDA extension on the GPU implementation of CertEx, whose kernels encode
the query rows with their certificates, run the routed tensor-core pass, and verify the survivors. The evaluated
instantiation uses the lifted width $m = 507$ on the 768- and 960-dimensional datasets, $m = 123$ on the SIFT
datasets, and $m = 91$ on the Deep1B prefixes, an anchor stride of $s = 16$, an anchor budget of $b = 1{,}024$, a threshold
slack of $\tau = 10^{-4}$, and batches of 8,192 queries. The verification computes the exact distances in float64 with one warp per query, in
rounds of at least 32 survivors. Float64 arithmetic costs little in this step, because a query verifies only a
few hundred vectors. On Deep-100M and DataComp, \sysname\ keeps the full vectors in pinned host memory.

\subsection{Passage Retrieval and Ranking Quality}
\label{sec:msmarco}

\begin{figure}[t]
  \centering
  \includegraphics[width=\columnwidth]{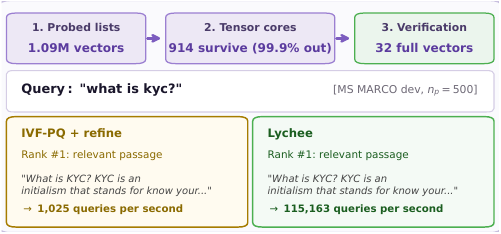}
  \caption{One MS MARCO dev query at 500 probed lists.}
  \label{fig:qual}
\end{figure}

Figure~\ref{fig:qual} traces one MS MARCO dev query through \sysname\ at 500 probed lists. The probed lists hold 1,086,390 passages, of which the certified screening on tensor cores keeps 914. The ordered verification then reads the full vectors of only 32 survivors before it returns the relevant passage at Rank~1. IVF-PQ with refinement returns the same passage at Rank~1, whereas its throughput at this setting is 1,025 queries per second against 115,163 for \sysname. Over all 6,980 dev queries at 500 probed lists, the two indexes reach an MRR@10 of 0.3334 and 0.3337 (Figure~\ref{fig:msmarco}), so that \sysname\ matches the ranking quality of refined quantized lists at 112 times the throughput of IVF-PQ.

\begin{figure}[t]
  \centering
  \includegraphics[width=\columnwidth]{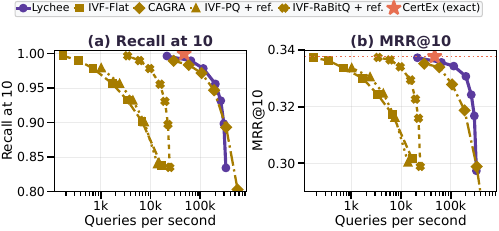}
  \caption{Passage recall and MRR@10 against throughput.}
  \label{fig:msmarco}
\end{figure}

Figure~\ref{fig:msmarco} plots recall at 10 and MRR@10 against throughput on MS MARCO. At 500 of 4,096 lists, recall at 10 reaches 0.979. \sysname\ answers 115,163 QPS at this operating point compared to 704 for IVF-Flat ($164\times$ faster), and 201,220 vs 1,713 QPS at 200 probed lists ($117\times$). Quantized indexes with full-precision refinement (IVF-PQ and IVF-SQ) lag by $70\times$ to $112\times$ (serving 1,025 and 1,650 QPS), because every query still scans the quantized codes of all its probed passages, although the refinement reads only $4k$ full vectors.

The right panel of Figure~\ref{fig:msmarco} evaluates the ranking quality of the retrieved passages. The float32 brute-force search of cuVS yields an MRR@10 of 0.3377 against official labels, slightly below the exact 0.3378, because its float32 distances order a few near-tied passages differently. \sysname\ preserves 0.3334 at 500 lists and 0.3357 at 1,000 lists. CAGRA explores fewer vectors at lower recall ($\le 0.95$), but saturates at recall 0.990 with 29,364 QPS and an MRR@10 of 0.3352. \sysname\ matches 0.990 recall at 65,178 QPS with higher MRR@10 (0.3357), delivering $2.2\times$ the throughput of CAGRA and $185\times$ that of IVF-Flat (352 QPS). Beyond recall 0.99 where probed lists cover half the corpus, exact search becomes more efficient: \sysname\ serves 21,457 QPS at recall 0.997, whereas exact CertEx retrieves the complete ground-truth passages with full MRR@10 (0.3378) at 47,630 QPS.

\subsection{Recall and Throughput}
\label{sec:recall}

\begin{figure}[t]
  \centering
  \includegraphics[width=\columnwidth]{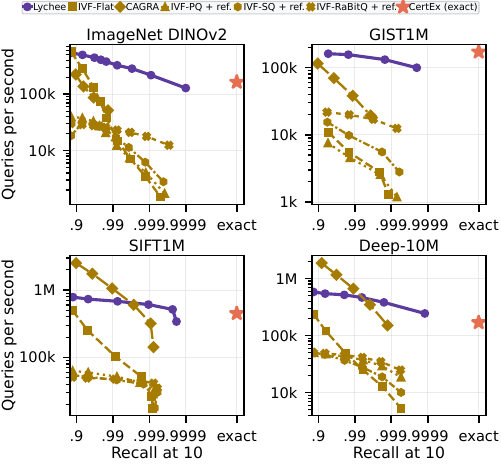}
  \caption{Recall at 10 and throughput on 1M--10M datasets.}
  \label{fig:recall-mid}
\end{figure}

Figure~\ref{fig:recall-mid} evaluates recall at 10 against throughput on ImageNet DINOv2, GIST1M, SIFT1M, and Deep-10M, testing whether certified screening preserves IVF recall while removing full-precision list scans. \sysname\ and IVF-Flat reach nearly the same recall at every $n_p$, since both return the exact top 10 of their probed lists (Proposition~\ref{prop:exact}) and differ only in their $k$-means lists. On wide embeddings, IVF-Flat throughput collapses as $n_p$ grows, whereas \sysname\ sustains high throughput: at $n_p = 200$, \sysname\ serves 217,195 QPS on DINOv2 and 131,302 on GIST1M at recall $\ge 0.998$ ($63\times$ and $47\times$ faster than IVF-Flat). Quantized indexes (IVF-SQ and IVF-RaBitQ with refinement) lag by $7.7\times$ to $35\times$, because every query still scans the codes of all its probed vectors. Graph-based CAGRA saturates at recall 0.986 on DINOv2 and 0.996 on GIST1M, where \sysname\ delivers $6.3\times$ and $6.8\times$ higher throughput at higher recall. On SIFT1M and Deep-10M, \sysname\ is $19.7\times$ and $30\times$ faster than IVF-Flat at $n_p = 200$; although CAGRA is faster at lower recall ($\le 0.995$) on SIFT1M, it saturates at recall 0.9992, where \sysname\ delivers $3.6\times$ higher throughput at recall 0.9998.

\begin{figure}[t]
  \centering
  \includegraphics[width=\columnwidth]{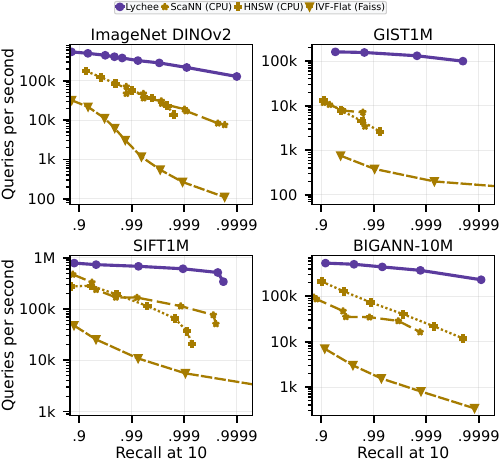}
  \caption{Throughput vs.\ CPU baselines on 192 cores.}
  \label{fig:recall-cpu}
\end{figure}

Figure~\ref{fig:recall-cpu} compares \sysname\ with CPU indexes running on all 192 cores of the second server. ScaNN, the strongest CPU baseline, reaches recall 0.999 at 18,866 QPS on DINOv2 and 105,628 on SIFT1M ($11.7\times$ and $5.7\times$ below \sysname), and saturates at recall 0.985 on GIST1M and 0.9987 on BIGANN-10M ($45\times$ and $23\times$ below \sysname). HNSW reaches at most recall 0.998 on DINOv2 and 0.992 on GIST1M, trailing \sysname\ by $14\times$ to $15\times$ at recall 0.999 on SIFT1M and BIGANN-10M. Faiss IVF-Flat serves 205 to 5,594 QPS at recall 0.999, which is $108\times$ to $894\times$ below \sysname.

\begin{figure}[t]
  \centering
  \includegraphics[width=\columnwidth]{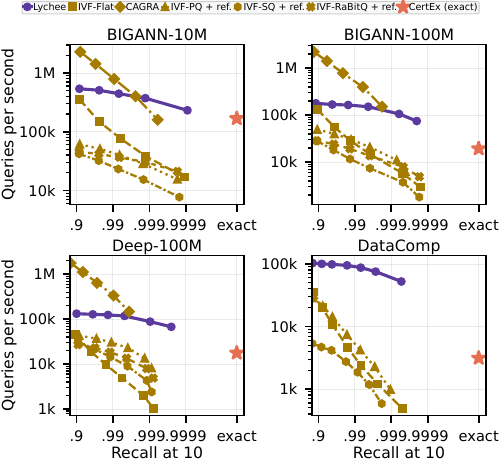}
  \caption{Recall at 10 and throughput on 10M--100M datasets.}
  \label{fig:recall-large}
\end{figure}

Figure~\ref{fig:recall-large} extends the evaluation to BIGANN-10M, BIGANN-100M, Deep-100M, and DataComp, where probed lists contain up to one million vectors per query. \sysname\ is $9.7\times$ to $10.4\times$ faster than IVF-Flat at $n_p = 200$, widening to $25\times$ on BIGANN-100M (75,053 vs 2,959 QPS at recall 0.9998) and $65\times$ on Deep-100M (67,548 vs 1,036 QPS at recall 0.9997) with full vectors in pinned host memory. Quantized indexes lag by $14\times$ to $42\times$, while CAGRA saturates at recall 0.998 on BIGANN-100M and 0.996 on Deep-100M. On DataComp, \sysname\ serves 75,565 QPS at recall 0.997 ($63\times$ faster than IVF-Flat, reaching $107\times$ at $n_p = 500$). At recall 0.999 on DataComp, \sysname\ is also $18\times$ faster than the exact search of CertEx, which answers 3,126 queries per second. Exact brute-force search is $36\times$ to $246\times$ slower than \sysname\ at recall 0.999 and suffers from float32 distance inaccuracies that return incorrect top-10 neighbors on Deep-10M and Deep-100M.

\subsection{Data Size Scalability}
\label{sec:size}

\begin{figure}[t]
  \centering
  \includegraphics[width=\columnwidth]{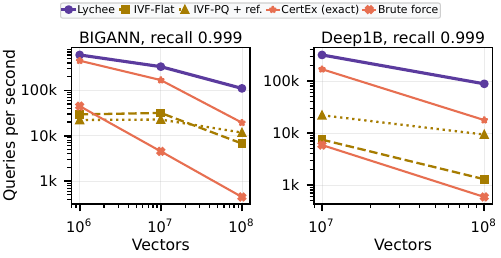}
  \caption{Throughput at recall 0.999 against dataset size.}
  \label{fig:size}
\end{figure}

Figure~\ref{fig:size} plots the throughput at recall 0.999 against the number of vectors on the BIGANN and Deep1B
prefixes, which tests the cost analysis of Section~\ref{sec:cost}. The exact methods screen or scan every vector, so
that their throughput falls in proportion to the dataset: CertEx answers 449,676, 168,264, and 19,427 queries per
second on SIFT1M, BIGANN-10M, and BIGANN-100M. \sysname\
instead screens only the probed lists, whose share of the dataset shrinks as the dataset grows, so that its throughput
at recall 0.999 reaches 1.3$\times$, 2.0$\times$, and 5.7$\times$ that of CertEx on the same three datasets.

The Deep1B prefixes show the same trend. \sysname\ is 1.9$\times$ faster than CertEx on Deep-10M and 5.0$\times$ faster on
Deep-100M, with the full vectors of Deep-100M in host memory. IVF-Flat, in contrast, stays 10$\times$ to 69$\times$ below \sysname\ at recall 0.999 on all
five prefixes, because it reads every vector of the probed lists at full precision. \sysname\ is therefore the
method of choice when a dataset is too large for exact screening of every row, while CertEx remains the exact
choice for datasets of a few million vectors.

\subsection{Billion-Scale Search on Multiple GPUs}
\label{sec:billion}

\begin{figure}[t]
  \centering
  \includegraphics[width=\columnwidth]{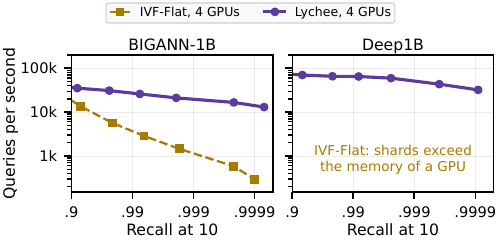}
  \caption{Billion-scale search on four H100 GPUs.}
  \label{fig:billion}
\end{figure}

Figure~\ref{fig:billion} evaluates one billion vectors on the four GPUs of the second server, which tests whether the
gain of \sysname\ holds at the scale of the largest public benchmarks~\cite{simhadri2022bigann}. Both methods split
the dataset into four contiguous shards of 250 million vectors with 16,384 $k$-means lists each, one shard per GPU. A query probes the $n_p$ nearest lists of every shard, every shard returns the exact top 10 of its probed
lists, and the host merges the four answers by distance. \sysname\ keeps the full vectors of its shard in pinned host
memory and only its lifted rows and anchors on its GPU, whereas IVF-Flat keeps its shard on the GPU.

On BIGANN-1B, \sysname\ answers 21,108 queries per second at recall 0.998 ($n_p = 200$), 16,615 at recall 0.9998
($n_p = 500$), and 13,032 at recall 0.9999 ($n_p = 1{,}000$), against 1,455, 583, and 293 for IVF-Flat, which is
14.5$\times$, 28.5$\times$, and 44.5$\times$ faster at equal recall. The build of \sysname\ takes 122~s against 398~s
for IVF-Flat. On Deep1B, a float32 shard of 250 million vectors occupies 96~GB and exceeds the memory of an H100, so
that IVF-Flat cannot build its index on the four GPUs. The lifted rows and anchors of \sysname\ occupy about 70~GB per
GPU, and \sysname\ answers 59,466 queries per
second at recall 0.998, 43,333 at recall 0.9996, and 32,082 at recall 0.9999 after a build of 270~s. \sysname\
therefore searches one billion float32 vectors on a single server at recall 0.9999, a configuration that the
full-precision lists of IVF-Flat cannot hold.

\subsection{Workload per Query}
\label{sec:work}

\begin{figure}[t]
  \centering
  \includegraphics[width=\columnwidth]{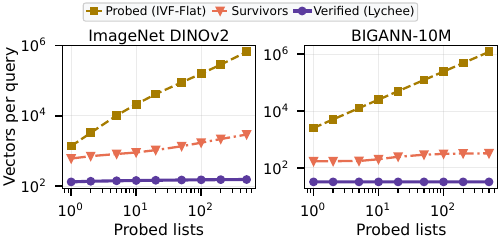}
  \caption{Probed, surviving, and verified vectors per query.}
  \label{fig:work}
\end{figure}

Figure~\ref{fig:work} counts the probed vectors, the survivors of the certified screening, and the verified vectors
per query as $n_p$ grows, which tests the cost model of Section~\ref{sec:cost}. On ImageNet DINOv2, the probed vectors
grow from 1,361 to 653,749 per query in proportion to $n_p$, while the survivors grow from 602 to 2,905 and the
verified vectors only from 131 to 152. On BIGANN-10M, the probed vectors grow to 1,206,157 per query, whereas the
survivors stay below 330 and the verification ends after its first round of 32 vectors. The lists that a larger $n_p$
adds lie far from the query, so that their vectors fail the certified test and never reach the verification. At
$n_p = 200$ on ImageNet DINOv2, \sysname\ therefore reads 151 full vectors per query against 282,650 for IVF-Flat, 1,870 times
fewer.

Survivors grow faster than verified vectors because the list-local threshold is looser than the final $k$-th distance. However, ordered verification (Algorithm~\ref{alg:search}) sorts survivors by lower bound and halts once a bound reaches the running $k$-th distance, reading 16.0$\times$ fewer vectors than verifying all survivors on ImageNet DINOv2 ($n_p = 200$, Table~\ref{tab:ablation}).

\subsection{Build Time and Host-Resident Vectors}
\label{sec:build}

\begin{figure}[t]
  \centering
  \includegraphics[width=\columnwidth]{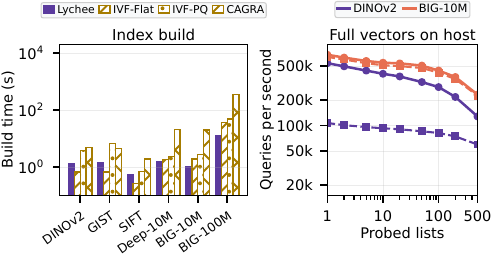}
  \caption{Index build time and host-resident throughput.}
  \label{fig:build}
\end{figure}

Figure~\ref{fig:build} reports the build time of every index and the throughput of the host-resident configuration, which test the cost of the index and the memory that it needs. The build time of \sysname\ covers the $k$-means training and assignment of its lists followed by one projection and encoding pass, building in 0.6 to 1.6~s on datasets up to 10M vectors and 12.4~s on BIGANN-100M. In contrast, CAGRA requires 1.9 to 20.3~s on smaller datasets and 362.7~s on BIGANN-100M ($29\times$ slower), while HNSW needs 10.2 to 185.8~s on 192 CPU cores. \sysname's index is therefore ready in seconds wherever a graph index takes minutes.

The host-resident configuration keeps full vectors in pinned host memory and only lifted rows and anchors on the GPU, reading survivors over PCIe. On ImageNet DINOv2, this uses 2.0~GB GPU memory (vs 3.9~GB for float32 vectors) while serving 74,627 QPS at recall 0.999 ($n_p = 200$, $22\times$ faster than device-resident IVF-Flat). On BIGANN-10M, it retains 95\% of device throughput at 200 probed lists. Device memory thus scales with $m$ rather than $D$ for wide embeddings, admitting large datasets like Deep-100M and DataComp onto a single GPU.

\subsection{Parameter Sensitivity}
\label{sec:sensitivity}

\begin{figure}[t]
  \centering
  \includegraphics[width=\columnwidth]{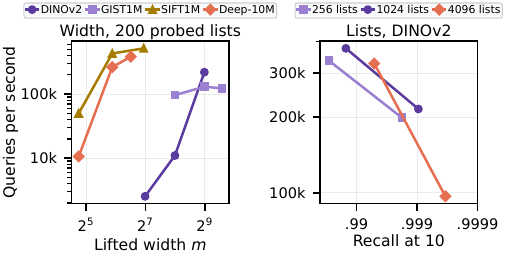}
  \caption{Parameter sensitivity to lifted width and lists.}
  \label{fig:sensitivity}
\end{figure}

Figure~\ref{fig:sensitivity} evaluates the two parameters of \sysname, the lifted width $m$ and the number of lists,
which test how tight the certified bound must be and how the partition trades list size against fineness. The left
panel plots the throughput at 200 probed lists against $m$. On ImageNet DINOv2, the throughput grows from 2,542
queries per second at $m = 127$ to 11,015 at $m = 255$ and 219,325 at $m = 507$, because the verified vectors per
query fall from 90,008 to 18,438 and 151. Deep-10M and SIFT1M follow the same trend up to 382,500 and 516,305 queries per second at their widest rows. GIST1M peaks at $m = 507$ (130,112 QPS), where $m = 763$ costs more than it saves. Wider rows pay off until the verified vectors approach the 32 of the first verification round.

The right panel plots recall against throughput for 256, 1,024, and 4,096 lists on ImageNet DINOv2 at the same shares
of probed lists. With 20\% of the lists probed, \sysname\ serves 199,023, 215,598, and 96,728 queries per second at
recall 0.9982, 0.9990, and 0.9997. Fewer and larger lists fill more of each tensor-core tile, because more queries of
a batch probe the same list, whereas more lists raise the recall at a given share. The default of 1,024 lists for a million vectors balances both effects, with throughput varying by $<2.3\times$ across a 16-fold list range.

\subsection{Ablation Study}
\label{sec:ablation}

\begin{table}[t]
  \centering
  \caption{Ablation of key components ($n_p = 200$).}
  \label{tab:ablation}
  \scriptsize
  \begin{tabular*}{\columnwidth}{@{\extracolsep{\fill}}llrrr@{}}
    \toprule
    Dataset & Variant & Survivors & Verified & QPS \\
    \midrule
    DINOv2   & \sysname\ (default)     & \textbf{2,410} & \textbf{151} & \textbf{223,688} \\
             & w/ global threshold     & 182,207        & 151          & 11,913 \\
             & w/o ordered verif.      & 2,410          & 2,410        & 34,812 \\
    \midrule
    SIFT1M   & \sysname\ (default)     & \textbf{208}   & \textbf{32}  & \textbf{575,043} \\
             & w/ global threshold     & 9,841          & 32           & 289,306 \\
             & w/o ordered verif.      & 208            & 208          & 214,500 \\
    \midrule
    BIGANN-10M & \sysname\ (default)   & \textbf{319}   & \textbf{32}  & \textbf{371,279} \\
             & w/ global threshold     & 81,930         & 32           & 49,235 \\
             & w/o ordered verif.      & 319            & 319          & 108,300 \\
    \bottomrule
  \end{tabular*}
\end{table}

Table~\ref{tab:ablation} isolates the two key components of \sysname, namely the list-local anchor threshold
(Section~\ref{sec:threshold}) and the ordered bound verification (Algorithm~\ref{alg:search}). The global variant
spreads the same budget of $b = 1{,}024$ anchors evenly over the whole dataset, as CertEx does. Such a threshold
does not guarantee $U(\vq) \ge r_k(\vq)$ (Section~\ref{sec:threshold}), although its anchors lie mostly far from the
query and make it much looser in practice than the $k$-th distance within the probed lists. Recall therefore stays at or above 0.999, whereas the survivors per query grow by
$75.6\times$, $47.3\times$, and $256.8\times$ on ImageNet DINOv2, SIFT1M, and BIGANN-10M. The throughput drops
accordingly by $18.8\times$, $2.0\times$, and $7.5\times$.

Without the ordered verification, every survivor is verified at full precision. The full-vector reads then grow by
$16.0\times$, $6.5\times$, and $10.0\times$ on the same datasets, so that the throughput drops by $6.4\times$ (from
223,688 to 34,812 QPS on ImageNet DINOv2), $2.7\times$, and $3.4\times$. Both components are therefore necessary,
since the threshold bounds the survivors, whereas the ordered verification bounds their full-precision reads.

\section{Conclusion}
\label{sec:conclusion}

High-recall search on inverted files spends its memory traffic on full-precision scans of the probed lists, while
graph indexes saturate below the high-recall targets. \sysname\ bounds the $k$-th distance within the probed lists
from their own anchors, screens the probed vectors on tensor cores with certified lower bounds, and verifies the few
survivors in increasing order of their bounds, which returns exactly the answer of IVF-Flat on the same lists. On eleven datasets of
up to one billion vectors and against ten GPU and CPU baselines, \sysname\ is up to 185$\times$ faster
than IVF-Flat at equal recall.
\bibliographystyle{abbrvnat}
\bibliography{refs}

@article{zhao2026certex,
  author       = {Dongfang Zhao},
  title        = {CertEx: Exact Visual Retrieval at Scale by Certified Exclusion},
  journal      = {arXiv preprint arXiv:2610.21325},
  year         = {2026},
  doi          = {10.48550/arXiv.2610.21325},
  url          = {https://arxiv.org/abs/2610.21325}
}

@inproceedings{ootomo2024cagra,
  author       = {Ootomo, Hiroyuki and Naruse, Akira and Nolet, Corey and Wang, Ray and Feher, Tamas and Wang, Yong},
  title        = {CAGRA: Highly Parallel Graph Construction and Approximate Nearest Neighbor Search for GPUs},
  booktitle    = {IEEE 40th International Conference on Data Engineering (ICDE)},
  pages        = {4236--4247},
  year         = {2024},
  doi          = {10.1109/ICDE60146.2024.00323},
  url          = {https://arxiv.org/abs/2308.15136}
}

@article{malkov2020hnsw,
  author       = {Malkov, Yu A. and Yashunin, D. A.},
  title        = {Efficient and Robust Approximate Nearest Neighbor Search Using Hierarchical Navigable Small World Graphs},
  journal      = {IEEE Transactions on Pattern Analysis and Machine Intelligence},
  volume       = {42},
  number       = {4},
  pages        = {824--836},
  year         = {2020},
  doi          = {10.1109/TPAMI.2018.2889473},
  url          = {https://doi.org/10.1109/TPAMI.2018.2889473}
}

@article{gao2024rabitq,
  author       = {Gao, Jianyang and Long, Cheng},
  title        = {RaBitQ: Quantizing High-Dimensional Vectors with a Theoretical Error Bound for Approximate Nearest Neighbor Search},
  journal      = {Proceedings of the ACM on Management of Data},
  volume       = {2},
  number       = {3},
  pages        = {1--27},
  year         = {2024},
  doi          = {10.1145/3654970},
  url          = {https://doi.org/10.1145/3654970}
}

@misc{nvidia2025cuvs,
  author       = {NVIDIA},
  title        = {{cuVS}: Vector Search and Clustering on the {GPU}},
  year         = {2026},
  howpublished = {\url{https://github.com/NVIDIA/cuvs}},
  note         = {Version 26.10}
}

@article{johnson2019faiss,
  author       = {Jeff Johnson and Matthijs Douze and Herv{\'e} J{\'e}gou},
  title        = {Billion-Scale Similarity Search with {GPUs}},
  journal      = {IEEE Transactions on Big Data},
  volume       = {7},
  number       = {3},
  pages        = {535--547},
  year         = {2019},
  doi          = {10.1109/TBDATA.2019.2921572}
}

@article{jegou2011pq,
  author       = {Jégou, Hervé and Douze, Matthijs and Schmid, Cordelia},
  title        = {Product Quantization for Nearest Neighbor Search},
  journal      = {IEEE Transactions on Pattern Analysis and Machine Intelligence},
  volume       = {33},
  number       = {1},
  pages        = {117--128},
  year         = {2011},
  doi          = {10.1109/TPAMI.2010.57},
  url          = {https://doi.org/10.1109/TPAMI.2010.57}
}

@inproceedings{sivic2003videogoogle,
  author       = {Josef Sivic and Andrew Zisserman},
  title        = {{Video Google}: A Text Retrieval Approach to Object Matching in Videos},
  booktitle    = {Proceedings of the IEEE International Conference on Computer Vision (ICCV)},
  pages        = {1470--1477},
  year         = {2003},
  doi          = {10.1109/ICCV.2003.1238663}
}

@article{aumuller2020annbench,
  author       = {Aumüller, Martin and Bernhardsson, Erik and Faithfull, Alexander},
  title        = {ANN-Benchmarks: A benchmarking tool for approximate nearest neighbor algorithms},
  journal      = {Information Systems},
  volume       = {87},
  pages        = {101374},
  year         = {2020},
  doi          = {10.1016/j.is.2019.02.006},
  url          = {https://doi.org/10.1016/j.is.2019.02.006}
}

@inproceedings{babenko2012imi,
  author       = {Artem Babenko and Victor Lempitsky},
  title        = {The Inverted Multi-Index},
  booktitle    = {Proceedings of the IEEE Conference on Computer Vision and Pattern Recognition (CVPR)},
  pages        = {3069--3076},
  year         = {2012},
  doi          = {10.1109/CVPR.2012.6248038},
  url          = {https://doi.org/10.1109/CVPR.2012.6248038}
}

@inproceedings{babenko2014additivequant,
  author       = {Artem Babenko and Victor Lempitsky},
  title        = {Additive Quantization for Extreme Vector Compression},
  booktitle    = {Proceedings of the IEEE Conference on Computer Vision and Pattern Recognition (CVPR)},
  pages        = {931--938},
  year         = {2014},
  doi          = {10.1109/CVPR.2014.124},
  url          = {https://openaccess.thecvf.com/content_cvpr_2014/papers/Babenko_Additive_Quantization_for_2014_CVPR_paper.pdf}
}

@inproceedings{babenko2016deep,
  author       = {Artem Babenko and Victor Lempitsky},
  title        = {Efficient Indexing of Billion-Scale Datasets of Deep Descriptors},
  booktitle    = {Proceedings of the IEEE Conference on Computer Vision and Pattern Recognition (CVPR)},
  pages        = {2055--2063},
  year         = {2016},
  doi          = {10.1109/CVPR.2016.226}
}

@inproceedings{baranchuk2018revisiting,
  author       = {Dmitry Baranchuk and Artem Babenko and Yury Malkov},
  title        = {Revisiting the Inverted Indices for Billion-Scale Approximate Nearest Neighbors},
  booktitle    = {Proceedings of the European Conference on Computer Vision (ECCV)},
  pages        = {209--224},
  year         = {2018},
  doi          = {10.1007/978-3-030-01258-8_13},
  url          = {https://openaccess.thecvf.com/content_ECCV_2018/html/Dmitry_Baranchuk_Revisiting_the_Inverted_ECCV_2018_paper.html}
}

@inproceedings{beyer1999meaningful,
  author       = {Beyer, Kevin and Goldstein, Jonathan and Ramakrishnan, Raghu and Shaft, Uri},
  title        = {When Is ``Nearest Neighbor'' Meaningful?},
  booktitle    = {Proceedings of the 7th International Conference on Database Theory (ICDT)},
  pages        = {217--235},
  year         = {1999},
  doi          = {10.1007/3-540-49257-7_15},
  url          = {https://doi.org/10.1007/3-540-49257-7_15}
}

@inproceedings{beygelzimer2006cover,
  author       = {Beygelzimer, Alina and Kakade, Sham and Langford, John},
  title        = {Cover trees for nearest neighbor},
  booktitle    = {Proceedings of the 23rd International Conference on Machine Learning (ICML)},
  pages        = {97--104},
  year         = {2006},
  doi          = {10.1145/1143844.1143857}
}

@inproceedings{chen2021spann,
  author       = {Qi Chen and Bing Zhao and Haidong Wang and Mingqin Li and Chuanjie Liu and Zengzhong Li and Mao Yang and Jingdong Wang},
  title        = {{SPANN}: Highly-efficient {Billion-scale} Approximate Nearest Neighbor Search},
  booktitle    = {Advances in Neural Information Processing Systems (NeurIPS)},
  volume       = {34},
  pages        = {5199--5212},
  year         = {2021},
  url          = {https://proceedings.neurips.cc/paper/2021/hash/299dc35e747eb77177d9cea10a802da2-Abstract.html}
}

@inproceedings{ciaccia1997mtree,
  author       = {Paolo Ciaccia and Marco Patella and Pavel Zezula},
  title        = {{M-tree}: An Efficient Access Method for Similarity Search in Metric Spaces},
  booktitle    = {Proceedings of the 23rd International Conference on Very Large Data Bases (VLDB)},
  pages        = {426--435},
  year         = {1997},
  url          = {http://www.vldb.org/conf/1997/P426.PDF}
}

@inproceedings{deng2009imagenet,
  author       = {Deng, Jia and Dong, Wei and Socher, Richard and Li, Li-Jia and Li, Kai and Fei-Fei, Li},
  title        = {{ImageNet}: A Large-Scale Hierarchical Image Database},
  booktitle    = {IEEE Conference on Computer Vision and Pattern Recognition (CVPR)},
  pages        = {248--255},
  year         = {2009},
  doi          = {10.1109/CVPR.2009.5206848}
}

@inproceedings{dong2011nndescent,
  author       = {Wei Dong and Moses Charikar and Kai Li},
  title        = {Efficient $k$-Nearest Neighbor Graph Construction for Generic Similarity Measures},
  booktitle    = {Proceedings of the 20th International Conference on World Wide Web (WWW)},
  pages        = {577--586},
  year         = {2011},
  doi          = {10.1145/1963405.1963487}
}

@article{douze2024faisslib,
  author       = {Douze, Matthijs and Guzhva, Alexandr and Deng, Chengqi and Johnson, Jeff and Szilvasy, Gergely and Mazaré, Pierre-Emmanuel and Lomeli, Maria and Hosseini, Lucas and Jégou, Hervé},
  title        = {The Faiss library},
  journal      = {IEEE Transactions on Big Data},
  volume       = {12},
  number       = {2},
  pages        = {346--361},
  year         = {2025},
  doi          = {10.1109/TBDATA.2025.3618474},
  url          = {https://doi.org/10.1109/TBDATA.2025.3618474}
}

@article{fu2019nsg,
  author       = {Cong Fu and Chao Xiang and Changxu Wang and Deng Cai},
  title        = {Fast Approximate Nearest Neighbor Search with the Navigating Spreading-out Graph},
  journal      = {Proceedings of the VLDB Endowment},
  volume       = {12},
  number       = {5},
  pages        = {461--474},
  year         = {2019},
  doi          = {10.14778/3303753.3303754}
}

@inproceedings{gadre2023datacomp,
  author       = {Gadre, Samir Yitzhak and Ilharco, Gabriel and Fang, Alex and Hayase, Jonathan and Smyrnis, Georgios and Nguyen, Thao and Marten, Ryan and Wortsman, Mitchell and Ghosh, Dhruba and Zhang, Jieyu and Orgad, Eyal and Entezari, Rahim and Daras, Giannis and Pratt, Sarah and Ramanujan, Vivek and Bitton, Yonatan and Marathe, Kalyani and Mussmann, Stephen and Vencu, Richard and Cherti, Mehdi and Krishna, Ranjay and Koh, Pang Wei and Saukh, Olga and Ratner, Alexander and Song, Shuran and Hajishirzi, Hannaneh and Farhadi, Ali and Beaumont, Romain and Oh, Sewoong and Dimakis, Alex and Jitsev, Jenia and Carmon, Yair and Shankar, Vaishaal and Schmidt, Ludwig},
  title        = {DataComp: In search of the next generation of multimodal datasets},
  booktitle    = {Advances in Neural Information Processing Systems (NeurIPS)},
  volume       = {36},
  pages        = {27092--27112},
  year         = {2023},
  doi          = {10.52202/075280-1179}
}

@article{gao2025extrabitq,
  author       = {Gao, Jianyang and Gou, Yutong and Xu, Yuexuan and Yang, Yongyi and Long, Cheng and Wong, Raymond Chi-Wing},
  title        = {Practical and Asymptotically Optimal Quantization of High-Dimensional Vectors in Euclidean Space for Approximate Nearest Neighbor Search},
  journal      = {Proceedings of the ACM on Management of Data},
  volume       = {3},
  number       = {3},
  pages        = {1--26},
  year         = {2025},
  doi          = {10.1145/3725413}
}

@inproceedings{garcia2008gpuknn,
  author       = {Garcia, Vincent and Debreuve, Eric and Barlaud, Michel},
  title        = {{Fast $k$-Nearest Neighbor Search Using {GPU}}},
  booktitle    = {2008 IEEE Computer Society Conference on Computer Vision and Pattern Recognition Workshops},
  pages        = {1--6},
  year         = {2008},
  doi          = {10.1109/cvprw.2008.4563100}
}

@article{ge2014opq,
  author       = {Tiezheng Ge and Kaiming He and Qifa Ke and Jian Sun},
  title        = {Optimized Product Quantization},
  journal      = {IEEE Transactions on Pattern Analysis and Machine Intelligence},
  volume       = {36},
  number       = {4},
  pages        = {744--755},
  year         = {2014},
  doi          = {10.1109/TPAMI.2013.240}
}

@article{groh2023ggnn,
  author       = {Groh, Fabian and Ruppert, Lukas and Wieschollek, Patrick and Lensch, Hendrik P. A.},
  title        = {GGNN: Graph-Based GPU Nearest Neighbor Search},
  journal      = {IEEE Transactions on Big Data},
  volume       = {9},
  number       = {1},
  pages        = {267--279},
  year         = {2023},
  doi          = {10.1109/tbdata.2022.3161156}
}

@inproceedings{guo2020scann,
  author       = {Guo, Ruiqi and Sun, Philip and Lindgren, Erik and Geng, Quan and Simcha, David and Chern, Felix and Kumar, Sanjiv},
  title        = {Accelerating Large-Scale Inference with Anisotropic Vector Quantization},
  booktitle    = {International Conference on Machine Learning (ICML)},
  series       = {Proceedings of Machine Learning Research},
  volume       = {119},
  pages        = {3887--3896},
  year         = {2020},
  url          = {https://proceedings.mlr.press/v119/guo20h.html}
}

@article{guo2022manu,
  author       = {Rentong Guo and Xiaofan Luan and Long Xiang and Xiao Yan and Xiaomeng Yi and Jigao Luo and Qianya Cheng and Weizhi Xu and Jiarui Luo and Frank Liu and Zhenshan Cao and Yanliang Qiao and Ting Wang and Bo Tang and Charles Xie},
  title        = {Manu: A Cloud Native Vector Database Management System},
  journal      = {Proceedings of the VLDB Endowment},
  volume       = {15},
  number       = {12},
  pages        = {3548--3561},
  year         = {2022},
  doi          = {10.14778/3554821.3554843}
}

@article{jagadish2005idistance,
  author       = {H. V. Jagadish and Beng Chin Ooi and Kian-Lee Tan and Cui Yu and Rui Zhang},
  title        = {{iDistance}: An Adaptive B+-Tree Based Indexing Method for Nearest Neighbor Search},
  journal      = {ACM Transactions on Database Systems (TODS)},
  volume       = {30},
  number       = {2},
  pages        = {364--397},
  year         = {2005},
  doi          = {10.1145/1071610.1071612}
}

@inproceedings{jegou2011bigann,
  author       = {J{\'e}gou, Herv{\'e} and Tavenard, Romain and Douze, Matthijs and Amsaleg, Laurent},
  title        = {Searching in one billion vectors: Re-rank with source coding},
  booktitle    = {2011 IEEE International Conference on Acoustics, Speech and Signal Processing (ICASSP)},
  pages        = {861--864},
  year         = {2011},
  doi          = {10.1109/ICASSP.2011.5946540},
  url          = {https://doi.org/10.1109/ICASSP.2011.5946540}
}

@inproceedings{karpukhin2020dpr,
  author       = {Vladimir Karpukhin and Barlas O{\u{g}}uz and Sewon Min and Patrick Lewis and Ledell Wu and Sergey Edunov and Danqi Chen and Wen-tau Yih},
  title        = {Dense Passage Retrieval for Open-Domain Question Answering},
  booktitle    = {Proceedings of the 2020 Conference on Empirical Methods in Natural Language Processing (EMNLP)},
  pages        = {6769--6781},
  year         = {2020},
  doi          = {10.18653/v1/2020.emnlp-main.550}
}

@inproceedings{khattab2020colbert,
  author       = {Omar Khattab and Matei Zaharia},
  title        = {{ColBERT}: Efficient and Effective Passage Search via Contextualized Late Interaction over {BERT}},
  booktitle    = {Proceedings of the 43rd International ACM SIGIR Conference on Research and Development in Information Retrieval},
  pages        = {39--48},
  year         = {2020},
  doi          = {10.1145/3397271.3401075}
}

@inproceedings{lewis2020rag,
  author       = {Patrick Lewis and Ethan Perez and Aleksandra Piktus and Fabio Petroni and Vladimir Karpukhin and Naman Goyal and Heinrich K{\"u}ttler and Mike Lewis and Wen-tau Yih and Tim Rockt{\"a}schel and Sebastian Riedel and Douwe Kiela},
  title        = {Retrieval-Augmented Generation for Knowledge-Intensive {NLP} Tasks},
  booktitle    = {Advances in Neural Information Processing Systems},
  volume       = {33},
  pages        = {9459--9474},
  year         = {2020},
  url          = {https://proceedings.neurips.cc/paper/2020/hash/6b493230205f780e1bc26945df7481e5-Abstract.html}
}

@article{lowe2004sift,
  author       = {Lowe, David G.},
  title        = {Distinctive Image Features from Scale-Invariant Keypoints},
  journal      = {International Journal of Computer Vision},
  volume       = {60},
  number       = {2},
  pages        = {91--110},
  year         = {2004},
  doi          = {10.1023/B:VISI.0000029664.99615.94},
  url          = {https://doi.org/10.1023/B:VISI.0000029664.99615.94}
}

@inproceedings{nguyen2016msmarco,
  author       = {Tri Nguyen and Mir Rosenberg and Xia Song and Jianfeng Gao and Saurabh Tiwary and Rangan Majumder and Li Deng},
  title        = {MS MARCO: A Human Generated MAchine Reading COmprehension Dataset},
  booktitle    = {Proceedings of the Workshop on Cognitive Computation: Integrating Neural and Symbolic Approaches, co-located with NeurIPS 2016},
  series       = {CEUR Workshop Proceedings},
  volume       = {1773},
  pages        = {Paper 9},
  numpages     = {5},
  year         = {2016},
  url          = {https://ceur-ws.org/Vol-1773/CoCoNIPS_2016_paper9.pdf}
}

@misc{nvidia2022h100,
  author       = {NVIDIA},
  title        = {{NVIDIA} {H100} Tensor Core {GPU} Architecture},
  year         = {2022},
  url          = {https://www.nvidia.com/en-us/data-center/h100/},
  howpublished = {Whitepaper}
}

@article{oquab2024dinov2,
  author       = {Oquab, Maxime and Darcet, Timothée and Moutakanni, Théo and Vo, Huy and Szafraniec, Marc and Khalidov, Vasil and Fernandez, Pierre and Haziza, Daniel and Massa, Francisco and El-Nouby, Alaaeldin and Assran, Mahmoud and Ballas, Nicolas and Galuba, Wojciech and Howes, Russell and Huang, Po-Yao and Li, Shang-Wen and Misra, Ishan and Rabbat, Michael and Sharma, Vasu and Synnaeve, Gabriel and Xu, Hu and Jegou, Hervé and Mairal, Julien and Labatut, Patrick and Joulin, Armand and Bojanowski, Piotr},
  title        = {DINOv2: Learning Robust Visual Features without Supervision},
  journal      = {Transactions on Machine Learning Research},
  year         = {2024},
  url          = {https://arxiv.org/abs/2304.07193}
}

@inproceedings{radford2021clip,
  author       = {Radford, Alec and Kim, Jong Wook and Hallacy, Chris and Ramesh, Aditya and Goh, Gabriel and Agarwal, Sandhini and Sastry, Girish and Askell, Amanda and Mishkin, Pamela and Clark, Jack and Krueger, Gretchen and Sutskever, Ilya},
  title        = {Learning Transferable Visual Models From Natural Language Supervision},
  booktitle    = {International Conference on Machine Learning (ICML)},
  series       = {Proceedings of Machine Learning Research},
  volume       = {139},
  pages        = {8748--8763},
  year         = {2021},
  url          = {https://proceedings.mlr.press/v139/radford21a.html}
}

@article{shi2026ivfrabitq,
  author       = {Jifan Shi and Jianyang Gao and James Xia and Tam{\'a}s B{\'e}la Feh{\'e}r and Cheng Long},
  title        = {{GPU}-Native Approximate Nearest Neighbor Search with {IVF-RaBitQ}: Fast Index Build and Search},
  journal      = {Proceedings of the VLDB Endowment},
  volume       = {19},
  number       = {11},
  pages        = {3454--3467},
  year         = {2026},
  doi          = {10.14778/3836663.3836701},
  url          = {https://doi.org/10.14778/3836663.3836701}
}

@inproceedings{simhadri2022bigann,
  author       = {Simhadri, Harsha Vardhan and Williams, George and Aumüller, Martin and Douze, Matthijs and Babenko, Artem and Baranchuk, Dmitry and Chen, Qi and Hosseini, Lucas and Krishnaswamy, Ravishankar and Srinivasa, Gopal and Subramanya, Suhas Jayaram and Wang, Jingdong},
  title        = {Results of the NeurIPS'21 Challenge on Billion-Scale Approximate Nearest Neighbor Search},
  booktitle    = {Proceedings of the NeurIPS 2021 Competitions and Demonstrations Track},
  series       = {Proceedings of Machine Learning Research},
  volume       = {176},
  pages        = {177--189},
  year         = {2022},
  url          = {https://proceedings.mlr.press/v176/simhadri22a.html}
}

@inproceedings{subramanya2019diskann,
  author       = {Subramanya, Suhas Jayaram and Devvrit and Kadekodi, Rohan and Krishnaswamy, Ravishankar and Simhadri, Harsha Vardhan},
  title        = {DiskANN: Fast Accurate Billion-point Nearest Neighbor Search on a Single Node},
  booktitle    = {Advances in Neural Information Processing Systems (NeurIPS)},
  volume       = {32},
  pages        = {13748--13758},
  year         = {2019},
  url          = {https://papers.nips.cc/paper_files/paper/2019/hash/09853c7fb1d3f8ee67a61b6bf4a7f8e6-Abstract.html}
}

@article{wang2021annsurvey,
  author       = {Mengzhao Wang and Xiaoliang Xu and Qiang Yue and Yuxiang Wang},
  title        = {A Comprehensive Survey and Experimental Comparison of Graph-Based Approximate Nearest Neighbor Search},
  journal      = {Proceedings of the VLDB Endowment},
  volume       = {14},
  number       = {11},
  pages        = {1964--1978},
  year         = {2021},
  doi          = {10.14778/3476249.3476255}
}

@inproceedings{wang2021milvus,
  author       = {Jianguo Wang and Xiaomeng Yi and Rentong Guo and Haihang Jin and Peng Xu and Shengjun Li and Xiangyu Wang and Xiangzhou Guo and Chengming Li and Xiaofan Xu and others},
  title        = {Milvus: A Purpose-Built Vector Data Management System},
  booktitle    = {Proceedings of the 2021 ACM SIGMOD International Conference on Management of Data},
  pages        = {2614--2627},
  year         = {2021},
  doi          = {10.1145/3448016.3457550}
}

@inproceedings{weber1998vafile,
  author       = {Roger Weber and Hans-J{\"o}rg Schek and Stephen Blott},
  title        = {A Quantitative Analysis and Performance Study for Similarity-Search Methods in High-Dimensional Spaces},
  booktitle    = {Proceedings of the 24th International Conference on Very Large Data Bases (VLDB)},
  pages        = {194--205},
  year         = {1998},
  url          = {http://www.vldb.org/conf/1998/p194.pdf}
}

@article{wei2020analyticdbv,
  author       = {Chuangxian Wei and Bin Wu and Sheng Wang and Renjie Lou and Chaoqun Zhan and Feifei Li and Yuanzhe Cai},
  title        = {{AnalyticDB-V}: A Hybrid Analytical Engine Towards Query Fusion for Structured and Unstructured Data},
  journal      = {Proceedings of the VLDB Endowment},
  volume       = {13},
  number       = {12},
  pages        = {3152--3165},
  year         = {2020},
  doi          = {10.14778/3415478.3415541}
}

@inproceedings{yianilos1993vptree,
  author       = {Peter N. Yianilos},
  title        = {{Data Structures and Algorithms for Nearest Neighbor Search in General Metric Spaces}},
  booktitle    = {Proceedings of the Fourth Annual ACM-SIAM Symposium on Discrete Algorithms (SODA)},
  pages        = {311--321},
  year         = {1993},
  url          = {https://dl.acm.org/doi/10.5555/313559.313789}
}

@inproceedings{zhao2020song,
  author       = {Zhao, Weijie and Tan, Shulong and Li, Ping},
  title        = {SONG: Approximate Nearest Neighbor Search on GPU},
  booktitle    = {2020 IEEE 36th International Conference on Data Engineering (ICDE)},
  pages        = {1033--1044},
  year         = {2020},
  doi          = {10.1109/icde48307.2020.00094}
}

@inproceedings{zhang2023vbase,
  author       = {Qianxi Zhang and Xuanhua Shi and Hanhua Chen and Hai Jin},
  title        = {{VBase}: Unifying High-Dimensional Vector and Relational Data Management},
  booktitle    = {Proceedings of the 17th USENIX Symposium on Operating Systems Design and Implementation (OSDI)},
  pages        = {421--438},
  year         = {2023},
  url          = {https://www.usenix.org/conference/osdi23/presentation/zhang-qianxi}
}

@inproceedings{hofstatter2021tasb,
  author       = {Sebastian Hofst{\"a}tter and Sheng-Chieh Lin and Jheng-Hong Yang and Jimmy Lin and Allan Hanbury},
  title        = {Efficiently Teaching an Effective Dense Retriever with Balanced Topic Aware Sampling},
  booktitle    = {Proceedings of the 44th International ACM SIGIR Conference on Research and Development in Information Retrieval},
  pages        = {113--122},
  year         = {2021},
  doi          = {10.1145/3404835.3462891},
  url          = {https://doi.org/10.1145/3404835.3462891}
}

@inproceedings{reimers2019sbert,
  author       = {Nils Reimers and Iryna Gurevych},
  title        = {Sentence-BERT: Sentence Embeddings using Siamese BERT-Networks},
  booktitle    = {Proceedings of the 2019 Conference on Empirical Methods in Natural Language Processing and the 9th International Joint Conference on Natural Language Processing (EMNLP-IJCNLP)},
  pages        = {3982--3992},
  year         = {2019},
  doi          = {10.18653/v1/D19-1410},
  url          = {https://aclanthology.org/D19-1410/}
}

@inproceedings{golub2024filtereddiskann,
  author       = {Siddharth Gollapudi and Neel Karia and Varun Sivashankar and Ravishankar Krishnaswamy and Nikit Begwani and Swapnil Raz and Yiyun Lin and Yin Zhang and Neelam Mahapatro and Premkumar Srinivasan and Amit Singh and Harsha Vardhan Simhadri},
  title        = {{Filtered-DiskANN}: Graph Algorithms for Approximate Nearest Neighbor Search with Filters},
  booktitle    = {Proceedings of the ACM Web Conference 2023 (WWW)},
  pages        = {3406--3416},
  year         = {2023},
  doi          = {10.1145/3543507.3583552},
  url          = {https://doi.org/10.1145/3543507.3583552}
}

@article{patel2024acorn,
  author       = {Liana Patel and Peter Kraft and Carlos Guestrin and Matei Zaharia},
  title        = {{ACORN}: Performant and Predicate-Agnostic Search Over Vector Embeddings and Structured Data},
  journal      = {Proceedings of the ACM on Management of Data},
  volume       = {2},
  number       = {3},
  pages        = {1--27},
  year         = {2024},
  doi          = {10.1145/3654923},
  url          = {https://doi.org/10.1145/3654923}
}

@article{wang2024starling,
  author       = {Mengzhao Wang and Weizhi Xu and Xiaomeng Yi and Songlin Wu and Zhangyang Peng and Xiangyu Ke and Yunjun Gao and Xiaoliang Xu and Rentong Guo and Charles Xie},
  title        = {{Starling}: An {I/O}-Efficient Disk-Resident Graph Index Framework for High-Dimensional Vector Similarity Search on Data Segment},
  journal      = {Proceedings of the ACM on Management of Data},
  volume       = {2},
  number       = {1},
  pages        = {1--27},
  year         = {2024},
  doi          = {10.1145/3639269},
  url          = {https://doi.org/10.1145/3639269}
}

@inproceedings{gui2026pilotann,
  author       = {Gui, Yuntao and Yin, Peiqi and Yan, Xiao and Zhang, Chaorui and Zhang, Weixi and Cheng, James},
  title        = {{PilotANN}: Memory-Bounded {GPU} Acceleration for Vector Search},
  booktitle    = {Proceedings of the 32nd ACM SIGKDD Conference on Knowledge Discovery and Data Mining},
  series       = {KDD '26},
  pages        = {348--358},
  numpages     = {11},
  year         = {2026},
  publisher    = {Association for Computing Machinery},
  address      = {New York, NY, USA},
  doi          = {10.1145/3770854.3780192},
  url          = {https://doi.org/10.1145/3770854.3780192},
  isbn         = {9798400722585},
  location     = {Republic of Korea}
}

@article{venkatasubba2025bang,
  author       = {Karthik Venkatasubba and Saim Khan and Somesh Singh and Harsha Vardhan Simhadri and Jyothi Vedurada},
  title        = {{BANG}: Billion-Scale Approximate Nearest Neighbour Search Using a Single {GPU}},
  journal      = {IEEE Transactions on Big Data},
  volume       = {11},
  number       = {6},
  pages        = {3142--3157},
  year         = {2025},
  doi          = {10.1109/TBDATA.2025.3581085},
  url          = {https://ieeexplore.ieee.org/document/11045134/}
}

@article{vidal1986aesa,
  author       = {Vidal Ruiz, Enrique},
  title        = {An algorithm for finding nearest neighbours in (approximately) constant average time},
  journal      = {Pattern Recognition Letters},
  volume       = {4},
  number       = {3},
  pages        = {145--157},
  year         = {1986},
  doi          = {10.1016/0167-8655(86)90013-9}
}

@inproceedings{faloutsos1994gemini,
  author       = {Faloutsos, Christos and Ranganathan, M. and Manolopoulos, Yannis},
  title        = {Fast subsequence matching in time-series databases},
  booktitle    = {Proceedings of the 1994 ACM SIGMOD international conference on Management of data},
  pages        = {419--429},
  year         = {1994},
  doi          = {10.1145/191839.191925},
  url          = {https://doi.org/10.1145/191839.191925}
}

@article{russakovsky2015ilsvrc,
  author       = {Russakovsky, Olga and Deng, Jia and Su, Hao and Krause, Jonathan and Satheesh, Sanjeev and Ma, Sean and Huang, Zhiheng and Karpathy, Andrej and Khosla, Aditya and Bernstein, Michael and Berg, Alexander C. and Fei-Fei, Li},
  title        = {ImageNet Large Scale Visual Recognition Challenge},
  journal      = {International Journal of Computer Vision},
  volume       = {115},
  number       = {3},
  pages        = {211--252},
  year         = {2015},
  doi          = {10.1007/s11263-015-0816-y}
}

@article{oliva2001gist,
  author       = {Oliva, Aude and Torralba, Antonio},
  title        = {Modeling the Shape of the Scene: A Holistic Representation of the Spatial Envelope},
  journal      = {International Journal of Computer Vision},
  volume       = {42},
  number       = {3},
  pages        = {145--175},
  year         = {2001},
  doi          = {10.1023/A:1011139631724},
  url          = {https://doi.org/10.1023/A:1011139631724}
}

@inproceedings{schroff2015facenet,
  author       = {Schroff, Florian and Kalenichenko, Dmitry and Philbin, James},
  title        = {{FaceNet}: A Unified Embedding for Face Recognition and Clustering},
  booktitle    = {Proceedings of the IEEE Conference on Computer Vision and Pattern Recognition (CVPR)},
  pages        = {815--823},
  year         = {2015},
  doi          = {10.1109/CVPR.2015.7298682}
}

@inproceedings{deng2019arcface,
  author       = {Deng, Jiankang and Guo, Jia and Xue, Niannan and Zafeiriou, Stefanos},
  title        = {{ArcFace}: Additive Angular Margin Loss for Deep Face Recognition},
  booktitle    = {Proceedings of the IEEE/CVF Conference on Computer Vision and Pattern Recognition (CVPR)},
  pages        = {4690--4699},
  year         = {2019},
  doi          = {10.1109/CVPR.2019.00482}
}

@article{muller2004cbir,
  author       = {M{\"u}ller, Henning and Michoux, Nicolas and Bandon, David and Geissbuhler, Antoine},
  title        = {A Review of Content-Based Image Retrieval Systems in Medical Applications: Clinical Benefits and Future Directions},
  journal      = {International Journal of Medical Informatics},
  volume       = {73},
  number       = {1},
  pages        = {1--23},
  year         = {2004},
  doi          = {10.1016/j.ijmedinf.2003.11.024}
}

@article{grossman2011tar,
  author       = {Grossman, Maura R. and Cormack, Gordon V.},
  title        = {Technology-Assisted Review in E-Discovery Can Be More Effective and More Efficient Than Exhaustive Manual Review},
  journal      = {Richmond Journal of Law and Technology},
  volume       = {17},
  number       = {3},
  pages        = {1--48},
  year         = {2011},
  url          = {https://scholarship.richmond.edu/jolt/vol17/iss3/5/}
}

@inproceedings{lee2022dedup,
  author       = {Lee, Katherine and Ippolito, Daphne and Nystrom, Andrew and Zhang, Chiyuan and Eck, Douglas and Callison-Burch, Chris and Carlini, Nicholas},
  title        = {Deduplicating Training Data Makes Language Models Better},
  booktitle    = {Proceedings of the 60th Annual Meeting of the Association for Computational Linguistics (ACL)},
  pages        = {8424--8445},
  year         = {2022},
  doi          = {10.18653/v1/2022.acl-long.577}
}

@article{abbas2023semdedup,
  author       = {Abbas, Amro and Tirumala, Kushal and Simig, D{\'a}niel and Ganguli, Surya and Morcos, Ari S.},
  title        = {{SemDeDup}: Data-Efficient Learning at Web-Scale through Semantic Deduplication},
  journal      = {arXiv preprint arXiv:2303.09540},
  year         = {2023},
  doi          = {10.48550/arXiv.2303.09540},
  url          = {https://arxiv.org/abs/2303.09540}
}

\end{document}